\newtheorem{theorem}{Theorem}
\newtheorem{lemma}{Lemma}
\newtheorem{corollary}{Corollary}
\newtheorem{fact}{Fact}
\date{xx}
\newcommand{\I}{\mathcal{I}}
\newcommand{\bfx}{\mathbf{x}}
\newcommand{\tr}{\operatorname{tr}}
\newcommand{\rank}{\mathrm{rank}}
\begin{document}
%
% paper title
% can use linebreaks \\ within to get better formatting as desired
\title{Information Splitting for Big Data Analytics  }

% author names and affiliations
% use a multiple column layout for up to two different
% affiliations

\author{\IEEEauthorblockN{Shengxin Zhu, Tongxiang Gu, Xiaowen Xu and Zeyao Mo}\\
\IEEEauthorblockA{Laboratory of Computational Physics,\\ Institute of Applied Physics and Computational Mathematics,\\
P.O.Box 8009, Beijing 100088, P.R. China\\
Email: {\{zhu\_shengxin;txgu;xwxu;zeyao\_mo\}}@iapcm.ac.cn}
}
% \and \IEEEauthorblockN{Authors Name/s per 2nd Affiliation (Author)}
%\IEEEauthorblockA{line 1 (of Affiliation): dept. name of organization\\
%line 2: name of organization, acronyms acceptable\\
%line 3: City, Country\\
%line 4: Email: name@xyz.com}
%}

% conference papers do not typically use \thanks and this command
% is locked out in conference mode. If really needed, such as for
% the acknowledgment of grants, issue a \IEEEoverridecommandlockouts
% after \documentclass

% for over three affiliations, or if they all won't fit within the width
% of the page, use this alternative format:
%
%\author{\IEEEauthorblockN{Michael Shell\IEEEauthorrefmark{1},
%Homer Simpson\IEEEauthorrefmark{2},
%James Kirk\IEEEauthorrefmark{3},
%Montgomery Scott\IEEEauthorrefmark{3} and
%Eldon Tyrell\IEEEauthorrefmark{4}}
%\IEEEauthorblockA{\IEEEauthorrefmark{1}School of Electrical and Computer Engineering\\
%Georgia Institute of Technology,
%Atlanta, Georgia 30332--0250\\ Email: see http://www.michaelshell.org/contact.html}
%\IEEEauthorblockA{\IEEEauthorrefmark{2}Twentieth Century Fox, Springfield, USA\\
%Email: homer@thesimpsons.com}
%\IEEEauthorblockA{\IEEEauthorrefmark{3}Starfleet Academy, San Francisco, California 96678-2391\\
%Telephone: (800) 555--1212, Fax: (888) 555--1212}
%\IEEEauthorblockA{\IEEEauthorrefmark{4}Tyrell Inc., 123 Replicant Street, Los Angeles, California 90210--4321}}

% use for special paper notices
%\IEEEspecialpapernotice{(Invited Paper)}

% make the title area
\maketitle

\begin{abstract}
Many statistical models require an estimation of unknown (co)-variance parameter(s) in a model. The estimation usually obtained by maximizing a log-likelihood which involves log determinant terms. In principle, one requires the \emph{observed information}--the negative Hessian matrix or the second derivative of the log-likelihood---to obtain an accurate maximum likelihood estimator according to the Newton method. When one uses the \emph{Fisher information}, the expect value of the observed information, a simpler algorithm than the Newton method is obtained as the Fisher scoring algorithm. With the advance in high-throughput technologies in the biological sciences, recommendation systems and social networks, the sizes of data sets---and the corresponding statistical models---have suddenly increased by several orders of magnitude. Neither the observed information nor the Fisher information is easy to obtained for these big data sets. This paper introduces an information splitting technique to simplify the computation. After splitting the mean of the observed information and the Fisher information, an simpler approximate Hessian matrix for the log-likelihood can be obtained. This approximated Hessian matrix can significantly reduce computations, and makes the linear mixed model applicable for big data sets. Such a spitting and simpler formulas heavily depends on matrix algebra transforms, and applicable to large scale breeding model, genetics wide association analysis.

\end{abstract}

\begin{IEEEkeywords}
Observed information matrix,
Fisher information matrix,
Fisher scoring algorithm,
linear mixed model,
breeding model,
geno-wide-association,
variance parameter estimation.
\end{IEEEkeywords}

% For peer review papers, you can put extra information on the cover
% page as needed:
 \ifCLASSOPTIONpeerreview
 \begin{center} \bfseries EDICS Category: 3-BBND \end{center}
\fi
%
% For peerreview papers, this IEEEtran command inserts a page break and
% creates the second title. It will be ignored for other modes.
\IEEEpeerreviewmaketitle

\section{Introduction}
%\IEEEPARstart

Many applications in animal/plant breeding \cite{masud13}, clinic trials, ecology and evolution \cite{Cell}, genome-wide association \cite{lip11,lis12,zhang10,zhou12} involve the following linear mixed model.
\begin{equation}
y=X\tau+Zu+\epsilon, \label{eq:LMM}
\end{equation}
where $y\in \mathbb{R}^{n\times 1}$ is a vector which consists of $n$ observations, $\tau\in
\mathbb{R}^{p\times 1}$ is a vector of $p$ fixed effects, $X \in
\mathbb{R}^{n\times p}$ is the \emph{design matrix} which corresponds to
fixed effects, $u \in \mathbb{R}^{b\times 1}$ is the vector of
unobserved random effects, $Z \in \mathbb{R}^{n\times b}$ is the
design matrix which corresponds to the random effects. $\epsilon \in
\mathbb{R}^{n\times 1}$ is the vector of residual errors. The random effects, $u$, and the residual errors, $\epsilon$,
are multivariate normal distributions such that $E(u)=0$,
$E(\epsilon)=0$, $u\sim N(0, \sigma^2 G)$, $\epsilon \sim N(0,
\sigma^2 R)$ and
\begin{equation}
\text{var}\left[\begin{array}{c}
u\\
\epsilon
\end{array}\right]=\sigma^{2}\left[\begin{array}{cc}
G & 0\\
0 & R
\end{array}\right],
\end{equation}
where $G\in \mathbb{R}^{b\times b }$, $R \in \mathbb{R}^{n\times
n}$.

When the co-variance matrices $G$ and $R$ are known, one can obtain the \emph{Best Linear Unbiased Estimators} (BLUEs), $\hat{\tau}$, for the fixed effects and the \emph{Best Linear Unbiased Prediction} (BLUP), $\tilde{u}$, for the random effects according to the maximum likelihood method, the Gauss-Markov-Aitiken least square \cite[\S 4.2]{Rao}. $\hat{\tau}$ and $\tilde{u}$ satisfy the following equations
\begin{equation}
C\begin{pmatrix}
\hat{\tau}  \\ \tilde{u}
\end{pmatrix}
=\begin{pmatrix}
X^TR^{-1}y \\ Z^T R^{-1}y
\end{pmatrix},
\label{eq:mme}
\end{equation}
where
$$
C=\begin{pmatrix}
X^TR^{-1}X  & X^T R^{-1}Z \\
Z^TR^{-1}X  & Z^TR^{-1}Z+G^{-1}
\end{pmatrix}.
$$
For such a forward problem, confidence or uncertainty of the estimations of the fixed and random effects can be quantified in term of co-variance of the estimators and the predictors
\begin{equation}
\mathrm{var} \begin{pmatrix}
\hat{\tau} \\
\tilde{u} -u
\end{pmatrix}
=\sigma^2 C^{-1}.
\label{eq:pre}
\end{equation}

In many other more interesting and more realistic cases, the co-variance matrices of the random effects and the random noise are unknown.
Typically $G$ and $R$ are parametric matrices with parameters that need to
be estimated, say, $G=G(\gamma)$ and $R=R(\phi)$. We shall denote $\kappa=(\gamma, \phi)$ and $\theta :=(\sigma^2, \kappa)$. To obtain an estimation on the confidence or uncertainty as in \eqref{eq:pre} one has to obtain an estimate on the variance parameter $\theta$ first.
Efficient statistical estimates for the variance parameters via the \emph{REstricted Maximum Likelihood} (REML) method %\cite{PT71}
requires to maximize the corresponding (restricted) log-likelihood function
\cite[p.252]{Searle06}(details follows in next section): % \cite{V90}
\begin{multline}
\ell_R(\theta)=\mathrm{const}-\frac{1}{2}\{ (n-\nu)\log\sigma^2 +\log
\det(H)  \\ +\log \det(X^TH^{-1}X) +\frac{y^TPy}{\sigma^2} \},
\label{eq:ellR}
\end{multline}
where $\nu=\rank(X)$,  $$H=R(\phi)+ZG(\gamma)Z^T,$$  and
\begin{equation}
P=H^{-1}-H^{-1}X(X^TH^{-1}X)^{-1}XH^{-1}. \label{eq:p}
\end{equation}
Here we suppose $\nu=p$.  To find an estimate for the variance parameter, one has to maximum of $\ell_{R}(\theta)$ according to the maximum likelihood principle.
This is a conceptually simple nonlinear Newton-Raphson iterative procedure as described in Algorithm~\ref{alg:NR}, however it can be computationally expensive to carry out for big data sets. %\cite{ACW12}.

Usually, one obtain the maximum or an quasi-maximum value by finding the \emph{stationary point} of the restricted log-likelihood, the zeros
of its first derivatives with respects to $\theta$.
These first derivatives of $\ell_R$ are referred to as the
\textit{scores} for the variance parameters $\theta$ or the log-likelihood \cite[p.252]{Searle06}:
\begin{align}
s(\sigma^2)&=\frac{\partial \ell_R}{\partial \sigma^2}
=-\frac{1}{2}\left\{
\frac{n-p}{\sigma^2}-\frac{y^TPy}{\sigma^4}\right \}, \label{eq:ssigma}\\
s(\kappa_i)&=\frac{\partial \ell_R}{\partial
\kappa_i}=-\frac{1}{2}\left\{ \tr(P\frac{\partial H}{\partial
\kappa_i}) -\frac{1}{\sigma^2}y^TP \frac{\partial H}{\partial
\kappa_i}P y  \right\}.
\end{align}
We shall denote $$S(\theta)=(s(\sigma^2),s(\kappa_1), \ldots,
s(\kappa_m))^T $$ as the \emph{score vector}. The negative of the Hessian of the log-likelihood
function, or the negative Jacobian matrix of the score vector, is referred to as the \textit{observed information matrix}. We
will denote the matrix as $\I_O$.
\begin{equation}
%\I([\sigma^2,\kappa], [\sigma^2, \kappa])
\I_O=  -
\begin{pmatrix}
 \frac{\partial \ell_R}{\partial \sigma^2\partial \sigma^2} &
 \frac{\partial \ell_R}{\partial \sigma^2 \partial \kappa_1} &
 \cdots &\frac{\partial \ell_R}{\partial \sigma^2 \partial \kappa_m}
 &
 \\
\frac{\partial \ell_R}{\partial \kappa_1\partial \sigma^2 }
&\frac{\partial \ell_R}{\partial \kappa_1\partial \kappa_1} & \cdots
&
\frac{\partial \ell_R}{\partial \kappa_1\partial \kappa_m} \\
\vdots

 & \vdots & \ddots & \vdots
 \\
\frac{\partial \ell_R}{\partial \kappa_m\partial \sigma^2}

 &  \frac{\partial \ell_R}{\partial \kappa_m\partial \kappa_1} &
 \cdots& \frac{\partial \ell_R}{\partial \kappa_m\partial \kappa_m}
\end{pmatrix}.
\label{eq:IO}
\end{equation}

An standard approach to find zeros of the score vector is the Newton-Raphson method in Algorithm~\ref{alg:NR} which requires the observed information.
Unfortunately, we shall see that computing the observed information is computationally prohibitively for large data sets.
Therefore, the \emph{expected information}---the \emph{Fisher information}---is often preferred due to its simplicity. The resulting algorithm
is referred to as the \textit{Fisher-scoring algorithm} \cite{jenn76,L87}.
The Fisher scoring algorithm is a success in simplifying the approximation of the Hessian
matrix of the log-likelihood. Still, evaluating
of the elements of the Fisher information matrix is one of the bottlenecks in a maximizing
log-likelihood procedure, which prohibits the Fisher scoring algorithm for larger data sets.
In particular, the high-throughput technologies in the biological science, recommendation systems,
engineering and social network mean that the size of data sets and the corresponding statistical models have suddenly
increased by several orders of magnitude. Further reducing computations is deserved in
large scale statical models like genome-wide association studies.% \cite{lip11}.

In \cite{ZGL16}, 
the authors prove that the mean of the observed information, $\I_O$, and Fisher information, $\I_F$, can be split into two parts:
\begin{equation}
\frac{\I_O+\I_F}{2} = \I_A+\I_Z.
\end{equation}
The expectation of the first part, $\I_A$ is equivalent to the Fisher information which maintains the essential information on the variance parameters while possesses a much simper form than the Fisher information; the second part $\I_Z$ involves a lot of computations is an random zero matrix.
The approximated information $\I_A$ is much simper than information used in \cite{lis12,zhang10}, \cite[eq.6, eq.7]{zhou12}. And such an approximation significantly reduce computations and speed up the linear mixed model \cite{WZW13}.

%Many statical methods require an estimation for unknown (co-)variance parameter(s) of a model.
%The estimation is usually obtained by maximizing a \textit{log-likelihood} (the logarithm of a
%likelihood function). In principle, one requires the \textit{observed information matrix}---the
%negative Hessian matrix or the second derivative of the log-likelihood---to obtain an accurate
%maximum likelihood estimator according to the (exact) Newton method \cite{EH78}, whereas

The aim  of this paper is to provide detailed derivation of such an information splitting formula and supplies an self-contained background information. The remaining of the paper is organised as follows. In \S~\ref{sec:prel} we shall introduce the restricted maximum likelihood method for linear mixed model and derives the formula for the restricted log-likelihood and its scores. In \S~\ref{sec:info}, we shall derive the observed, Fisher, and the averaged splitting information. In \S~\ref{sec:comp} we shall derive computational friendly formulas for evaluating elements of the averaged splitting information matrix. We concludes the paper with some discussion in the last section.

\begin{algorithm*}[!t]
\caption{Newton-Raphson/Fisher Scoring/Averaged Information Splitting method to solve {$S(\theta)=0$.}}
\begin{algorithmic}[1]
\State {Give an initial guess of $\theta_0$} \For{ $k=0, 1, 2,
\cdots$ until convergence }
 \State{Solve $$ \begin{cases}
 \I_O(\theta_k) \delta_k=S(\theta_k)     & \text{ for Newton-Raphson} \\
\I_F(\theta_k)  \delta_k=S(\theta_k)     & \text{ for Fisher scoring} \\
\I_A(\theta_k)  \delta_k=S(\theta_k)     & \text{ for average information splitting}
\end{cases}$$}
 \State{$\theta_{k+1}=\theta_k+\delta_k$}
\EndFor
\end{algorithmic}
\label{alg:NR}
\end{algorithm*}

\section{restricted log-likelihoods and its score}
\label{sec:prel}
The restricted maximum likelihood method was introduced by Patterson and Thompson \cite{PT71}. The aim of the method was to reduce bias in animal breeding models with unbalanced block structures. The maximum likelihood estimates of variance parameters in a linear mixed model have large bias. We shall fist use the simple linear model to articulate the difference between an ML estimation and a REML estimation of the variance parameters.

Consider the simplest linear model
\begin{equation}
y=X \tau +\epsilon, \quad \epsilon \sim N(0, \sigma^2 I). \label{eq:LM}
\end{equation}
The bias between an maximum likelihood estimation for the variance parameter, $\hat{\sigma}^2$ and $\sigma^2$ is
\begin{equation}
\mathrm{Bias}(\hat{\sigma}^2, \sigma^2)=E(\hat{\sigma}^2) -\sigma^2 = \frac{p}{n}\sigma^2, \label{eq:MLBIAS}
\end{equation}
where $p=\mathrm{rank}(X)$ is the number of fixed effects and $n$ is the number of observations. When the observations is small and the number of fixed effects is relative large to the observations. the bias $\mathrm{Bias}(\hat{\sigma}^2, \sigma^2)$ is relative large. Such situations happen when one subdivide a big data set into many small groups, and view each group as an individual block.
\subsection{Bias of ML estimation on variance parameter}
The individual observations
$y_i$, $i=1,\ldots,n$, in the linear model \eqref{eq:ML} are statistically independent and have
distribution $y_i \sim N(\bfx_i \tau, \sigma^2)$, where $\bfx_i$
is the \textit{i}th row of $X$. The likelihood for the joint
distribution is defined as
\begin{align*}
&L(\tau, \sigma^2; y)= \prod _{i=1}^{n}f(y_i; \tau, \sigma^2)
=\prod_{i=1}^n
\frac{\exp\left(-\frac{(y_i-x_i\tau)^2}{2\sigma^2}\right)}{\sqrt{2\pi\sigma^2}}\\
&=\left(\frac{1}{\sqrt{2\pi\sigma^2}}\right)^n
\exp\left(-\frac{(y-X\tau)^T(y-X\tau)}{2\sigma^2}\right).
\end{align*}

The log-likelihood function, $\ell$, is given by
\begin{align*}
\ell&=\log L(\tau, \sigma^2; y) \\
&=-\frac{n}{2}\log(2\pi \sigma^2)-\frac{(y-X\tau)^T(y-X\tau)}{2\sigma^2}
\\
&=-\frac{n\log(2\pi)}{2}-\frac{n\log(\sigma^2)}{2}\\
& -\frac{1}{\sigma^2}(y^Ty-2y^TX\tau+\tau^TX^TX\tau).
\end{align*}
The \textit{score  functions} of $\tau$ and $\sigma^2$ are the first derivatives of $\ell$ with respect to $\tau$ and $\sigma^2$ respectively
\begin{align*}
s(\tau):&=\frac{\partial \ell}{\partial \tau}
= \frac{1}{\sigma^2}(X^Ty -X^TX\tau),
\\
s(\sigma^2):&=\frac{\partial \ell}{\partial \sigma^2}
=\frac{-n}{2\sigma^2}+\frac{1}{2\sigma^4}(y-X\tau)^T(y-X\tau).
\end{align*}
The maximum likelihood estimation for the fixed effects,
$\hat{\tau}$, and the variance parameter $\hat{\sigma}^2$
satisfy that
\begin{equation}
\begin{cases}
s(\hat{\tau})=\frac{\partial \ell}{\partial \tau}\big\vert_{\hat{\tau}}=0,  \\
s(\hat{\sigma}^2)=
\frac{\partial \ell}{\partial \sigma^2}\big\vert_{\hat{\sigma}^2}=0.
\end{cases}
\label{eq:eML}
\end{equation}
For this simple model, the ML estimation is obtained easily:
\begin{equation*}
\begin{cases}
\hat{\tau}=(X^TX)^{-1}X^Ty,\\
\hat{\sigma^2}=\frac{1}{n}(y^T(I-P_X)y):=\frac{S_R}{n},
\end{cases}
\end{equation*}
where
$
P_X=X(X^TX)^{-1}X^T  \label{eq:px}
$
is the \textit{projection matrix} for $X$ and
\begin{align}
S_R:&=(y-X\hat{\tau})^T(y-X\hat{\tau})=y^T(I-P_X)y.
\label{eq:SR}
\end{align}
is the \textit{residual sum of squares}. Since
\begin{equation}
E(y^T(I-P_X)y) % &=E(\tr((I-P_X)yy^T))
 =\tr((I-P_X)E(yy^T))
=(n-p)\sigma^2,  \label{eq:ESR}
\end{equation}
where $p=\operatorname{rank}(X)$, we have
\begin{equation}
E(\hat{\sigma}^2)=\frac{E(y^T(I-P)y)}{n}=\frac{n-p}{n}\sigma^2.% \neq \sigma^2.
\label{eq:ML}
\end{equation}
Without difficulty, one can obtain the bias between the ML estimation $\hat{\sigma}^2$ and $\sigma^2$ as in \eqref{eq:MLBIAS}.

\subsection{REML estimation for linear models}

In the framework of REML, the observation $y$ is divided into two (orthogonal) components: one of the
component of $y$ contains all the (fitted) residual error information
in the linear mixed model \eqref{eq:LM}. Employing the maximum likelihood on the two orthogonal components results in two smaller problems (compared with the ML estimation). The partition is constructed based on the following lemma
\begin{lemma}
Let $X\in \mathbb{R}^{n\times p}$ be full rank and the projection matrix $P_X=X(X^TX)^{-1}X$, then there exist an orthogonal matrix $K=[K_1, K_2]$ such that
\begin{enumerate}
\item $P_X=K_1K_1^T$;
\item $K_2^TX=0$;
\item $I-P_X=K_2K_2$.
\end{enumerate}
\label{lemma:1}
\end{lemma}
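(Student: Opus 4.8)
The plan is to construct $K$ from the spectral structure of the projector $P_X$, since items (1)--(3) are merely a repackaging of the fact that an orthogonal projection splits $\mathbb{R}^n$ into the range of $X$ and its orthogonal complement. First I would verify that $P_X$ is symmetric, $P_X^T=P_X$, and idempotent, $P_X^2 = X(X^TX)^{-1}(X^TX)(X^TX)^{-1}X^T = X(X^TX)^{-1}X^T = P_X$; and that $\rank(P_X)=\rank(X)=p$, which uses the assumption that $X$ has full column rank (note the stated $P_X = X(X^TX)^{-1}X$ should read $X(X^TX)^{-1}X^T$, with the transpose at the end).

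A symmetric idempotent matrix has every eigenvalue equal to $0$ or $1$, so the spectral theorem supplies an orthogonal $K\in\mathbb{R}^{n\times n}$ with $P_X = K\Lambda K^T$ and $\Lambda$ diagonal; since $\rank(P_X)=p$, exactly $p$ diagonal entries of $\Lambda$ equal $1$ and the remaining $n-p$ equal $0$. Ordering the eigenvalues so the $1$'s come first and partitioning $K=[K_1\ K_2]$ with $K_1\in\mathbb{R}^{n\times p}$ and $K_2\in\mathbb{R}^{n\times(n-p)}$, we get $\Lambda=\operatorname{diag}(I_p,0)$ and hence $P_X = K_1 K_1^T$, which is item (1). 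The columns of $K_2$ are eigenvectors for eigenvalue $0$, so $P_X K_2 = 0$; multiplying $X(X^TX)^{-1}X^TK_2 = 0$ on the left by $X^T$ gives $(X^TX)(X^TX)^{-1}X^TK_2 = X^TK_2 = 0$, i.e.\ $K_2^T X = 0$, which is item (2). Finally, orthogonality of $K$ gives $I = KK^T = K_1 K_1^T + K_2 K_2^T$, so $I - P_X = K_2 K_2^T$, which is item (3) (the right-hand side in the statement should be $K_2 K_2^T$).

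I expect no genuine obstacle here; the only points needing care are that the second block has dimension $n-p$ (this is exactly where the full-rank hypothesis enters) and the direction of the argument for item (2), where one deduces $K_2^T X = 0$ from $P_X K_2 = 0$ rather than the reverse. An equivalent and arguably cleaner route, which I might present instead, starts from a thin QR factorization $X = QR$ with $Q\in\mathbb{R}^{n\times p}$ having orthonormal columns and $R$ upper triangular and invertible (invertibility because $X$ has full column rank); completing $Q$ to an orthogonal matrix $K=[Q\ Q_\perp]$ and using $P_X = QR(R^TR)^{-1}R^TQ^T = QQ^T$, one reads off (1)--(3) directly with $K_1=Q$ and $K_2=Q_\perp$.
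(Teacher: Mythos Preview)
Your argument is correct and follows essentially the same route as the paper: both use the spectral decomposition of the symmetric idempotent $P_X$ to obtain $P_X=K_1K_1^T$ and $I-P_X=K_2K_2^T$, with only a cosmetic difference in item~(2) (the paper argues $P_XX=X$ so $X$ lies in the $1$-eigenspace and is orthogonal to $K_2$, whereas you multiply $P_XK_2=0$ on the left by $X^T$). Your QR alternative is a nice bonus but not needed.
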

\begin{proof}
Since $P_X$ is a projection matrix with rank $p$, the matrix is symmetric with eigenvalues 1 and 0. Then there exist an eigenvalue decomposition as
\[
P_X=(K_1, K_2) \begin{pmatrix}
I_p & 0 \\
0 & 0
\end{pmatrix}\begin{pmatrix}
K_1^T \\
K_2^T
\end{pmatrix}
 =K_1K_1^T.
\]
Therefore, $K_1 \in V_1$, and $K_2 \in V_0$ where $V_i$, $i=0,1$ is the eigenspace which corresponds to the eigenvalue $i$.
For a projection matrix, we have $P_XX=X$, therefore, $X\in V_1$ and $K_2^TX=0$ because eigenvectors which associate to different eigenvalues are orthogonal. The term 3) follows because
\begin{align*}
I-P_X= [K_1 K_2] \begin{pmatrix}K_1^T \\K_2^T
\end{pmatrix} -K_1K_1^T =K_2K_2^T.
\end{align*}
\end{proof}
Let $K=[K_1, K_2]$ be an orthogonal matrix such that
$P_X=K_1K_1^T$ and $K_2^TX=0$. Let
$y_i=K_i^Ty$, $i=1,2$, then $E(K^Ty)=K^TE(y)$, and
$$\mathrm{var}(K^Ty)=K^T\mathrm{var}(y)K=\sigma^2 I.$$
\begin{equation}
\begin{pmatrix}
y_1\\ y_2
\end{pmatrix}
\sim N\left(\begin{pmatrix} K_1^TX\tau \\ 0 \end{pmatrix},
\sigma^2\begin{pmatrix} I_p & 0\\ 0 &I_{n-p}
\end{pmatrix}\right).
\end{equation}
Apply the maximum likelihood methods to $y_1$ and $y_2$, we obtain
two likelihood functions
\begin{align}
\ell_1(\tau, \sigma^2;y_1) & =-\frac{p}{2}\log(2\pi)-\frac{p}{2}\log(\sigma^2)-
 \nonumber \\
 & \frac{(y_1-K_1^TX\tau)^T(y_1-K_1^TX\tau)}{2\sigma^2}, \\
\ell_R =\ell_2(\sigma^2; y_2) &= -\frac{n-p}{2}\log(2\pi\sigma^2)
+\frac{y_2^Ty_2}{\sigma^2}.
\end{align}

The estimation of $\tau$ based on the likelihood function $\ell_1$ is the same as
the maximum likelihood estimation:
\begin{align*}
\hat{\hat{\tau}}=(X^T\underbrace{K_1K_1^T}_{=P_X}X)^{-1}X^T\underbrace{K_1K_1^T}_{=P_X}y
=(X^TX)^{-1}X^Ty=\hat{\tau}.
\end{align*}

Restrict the maximum likelihood method to the marginal distribution of the residual
component, $y_2$, one can obtain the estimation of the variance parameters. The marginal log-likelihood function, $\ell_2$, does not depend on $\tau$.
The estimation of $\sigma^2$ in the marginal likelihood is
\begin{equation}
\hat{\hat{\sigma}}^2= \frac{y_2^Ty_2}{n-p}=\frac{y^TK_2K_2^Ty}{n-p}
 =\frac{y^T(I-P_X)y}{n-p}=\frac{S_R}{n-p}, \label{eq:REMLsigma}
\end{equation}
where $S_R$ is the residual sum of squares defined in
\eqref{eq:SR}.
Employ the result in \eqref{eq:ESR}, we conclude that the REML
estimation for the variance parameter is unbiased,
$$E(\hat{\hat{\sigma}}^2)=\sigma^2.$$
Therefore, the REML estimation on variance parameter in the linear model \eqref{eq:ML} is unbiased. In summary, REML has twofold functionalities. On one hand, one can use the REML as an approach to model reduction, reducing the problem size. On the other hand, it reduces the bias for the estimation of the variance parameter.

\subsection{Restricted log-likelihood for linear mixed model}

 The construction of the REML
estimation for \eqref{eq:LM} with general (co-)variance
structure is constructed based on the following fact.
\begin{lemma}
Let $X\in \mathbb{R}^{n\times p}$ be full rank with $p<n$, then there exists an nonsingular matrix $L=[L_1, L_2]$ such that
\begin{enumerate}
\item $L_1^TX=I_{p\times p}$;
\item $L_2^TX=0$;
\item $I-P_X=I-X(X^TX)^{-1}X^T=L_2(L_2^TL_2)^{-1}L_2^T$.
\end{enumerate}
\label{lemma:2}
\end{lemma}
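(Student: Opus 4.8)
The plan is to exhibit the two blocks of $L$ explicitly and then verify the three properties, borrowing the ``orthogonal complement'' block from Lemma~\ref{lemma:1}. First I would set $L_1 := X(X^TX)^{-1} \in \mathbb{R}^{n\times p}$; since $X$ has full column rank, $X^TX$ is invertible, and property~1 is immediate from $L_1^TX = (X^TX)^{-1}X^TX = I_p$. Note also that, because $(X^TX)^{-1}$ is nonsingular, $\operatorname{col}(L_1) = \operatorname{col}(X)$, a fact I will need for the nonsingularity argument.

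For the second block I would take $L_2 := K_2$, the $n\times(n-p)$ matrix furnished by Lemma~\ref{lemma:1}, whose columns form an orthonormal basis of the eigenspace $V_0$ of $P_X$; equivalently $\operatorname{col}(L_2) = \operatorname{col}(X)^\perp = \ker(X^T)$. Then $L_2^TX = 0$, which is property~2 and exactly Lemma~\ref{lemma:1}(2). Since $K_2$ has orthonormal columns, $L_2^TL_2 = I_{n-p}$, so $L_2(L_2^TL_2)^{-1}L_2^T = L_2L_2^T = K_2K_2^T = I - P_X$ by Lemma~\ref{lemma:1}(3), which is property~3. (More generally, property~3 holds for any $L_2$ whose columns are merely a basis of $\operatorname{col}(X)^\perp$: the matrix $L_2(L_2^TL_2)^{-1}L_2^T$ is the orthogonal projector onto $\operatorname{col}(L_2)$, and the orthogonal projector onto $\operatorname{col}(X)^\perp$ is $I-P_X$; choosing $L_2=K_2$ just makes $L_2^TL_2=I$ and the identity transparent.)

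It remains to check that $L=[L_1,L_2]$ is nonsingular. It is a square $n\times n$ matrix, so it suffices to show its columns are linearly independent. Suppose $L_1 a + L_2 b = 0$ for $a\in\mathbb{R}^p$ and $b\in\mathbb{R}^{n-p}$. Then $L_1 a = -L_2 b$ lies in $\operatorname{col}(X)\cap\operatorname{col}(X)^\perp = \{0\}$, so $L_1 a = 0$ and $L_2 b = 0$; since $L_1$ and $L_2$ each have full column rank, $a = 0$ and $b = 0$. Hence $L$ is invertible.

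I do not expect a genuine obstacle in this argument; the only point needing care is recognizing that $L_2(L_2^TL_2)^{-1}L_2^T$ is precisely the orthogonal projection onto $\operatorname{col}(L_2)$ and that this coincides with $I-P_X$ once $\operatorname{col}(L_2)$ is the orthogonal complement of $\operatorname{col}(X)$ — after that, everything reduces to the direct verifications above. Routing the complement block through $K_2$ of Lemma~\ref{lemma:1} keeps $L_2^TL_2=I$ and lets property~3 be read off the earlier lemma rather than re-proved.
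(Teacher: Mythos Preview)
Your argument is correct. You construct $L_1 = X(X^TX)^{-1}$ and $L_2 = K_2$ explicitly, verify properties~1--3 by direct computation (the third by quoting Lemma~\ref{lemma:1}(3) together with $K_2^TK_2=I$), and then check invertibility of $L$ via the column-space decomposition $\operatorname{col}(X)\oplus\operatorname{col}(X)^\perp=\mathbb{R}^n$. Nothing is missing.

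The paper, however, does not build $L$ explicitly. It fixes any nonsingular $B\in\mathbb{R}^{(n-p)\times(n-p)}$, notes that $[X,\,K_2B^T]$ is invertible, and \emph{defines} $L^T=[X,\,K_2B^T]^{-1}$; properties~1 and~2 then drop out of the block structure of $L^T[X,\,K_2B^T]=I$, and nonsingularity of $L$ is automatic. For property~3 the paper does not invoke Lemma~\ref{lemma:1}(3) at all: instead it forms $D=[X,\,L_2]$, expands $I=D(D^TD)^{-1}D^T$ in block form, and uses $L_2^TX=0$ to read off $I=P_X+L_2(L_2^TL_2)^{-1}L_2^T$. Your route is shorter and more transparent for an existence statement, since choosing $L_2=K_2$ makes $L_2^TL_2=I$ and collapses property~3 to an earlier result. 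The paper's route, on the other hand, yields a whole family of admissible $L$'s (one for each $B$) and, more importantly, its block-identity argument for property~3 works for \emph{any} full-rank $L_2$ with $L_2^TX=0$, not just one with orthonormal columns --- a fact you mention parenthetically but do not actually need for the lemma as stated.
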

\begin{proof}
Let $B \in \mathbb{R}^{(n-p)\times (n-p)}$ be any nonsingular matrix and $K_2K_2^T=I-P_X$ as in Lemma~\ref{lemma:1}. Then $BK_2^TX=0$ and $\rank(K_2B^T)=n-p$. Therefore, $\{X, K_2B^T \}$ forms a set of basis of $\mathbb{R}^{n\times n}$. Denote $L^T=[X K_2B^T]^{-1}$, then $L^T[X, K_2B^T]=I$, we have
\begin{equation}
\begin{pmatrix}
L_1^TX  & L_1^TK_2B^T \\
L_2^TX  & L_2^TK_2B^T
\end{pmatrix}
=
\begin{pmatrix}
I_{p\times p} & 0 \\
0 & I_{(n-p) \times (n-p)}
\end{pmatrix}.
\end{equation}
This gives that $L_1^T X=I_{p\times p}$ and $L_2^TX=0_{(n-p)\times p}$.

To prove 3), let $D=[X,L_2]$, then the columns of $D$ forms a basis set of $\mathbb{R}^{n\times n}$ and $D$ is an nonsingular matrix.
Apply the identity $I=DD^{-1}D^{-T}D=D(D^TD)^{-1}D^T$ and write it in a block matrix multiplication forms
\[
 I=(X, L_2) \begin{pmatrix}
X^TX & X^TL_2 \\ L_2^TX & L_2^TL_2
\end{pmatrix}^{-1}
\begin{pmatrix}
X^T  \\ L_2^T
\end{pmatrix}.
\]
Employ the fact the $L_2^TX=0$, and multiply the block matrices on the right hand side, we have
$$
I=P_X+L_2(L_2^TL_2)^{-1}L_2^T.
$$
\end{proof}
\begin{corollary}
Let $X\in \mathbb{R}^{n\times p}$ be full rank with $p<n$, $H$ be a positive definite matrix, and $L=[L_1,L_2]$ be the linear transform matrix in Lemma~\ref{lemma:2} such that
$L_1^TX=I_{p\times p}$ and $L_2^TX=0_{(n-p) \times p}$.  If $P$ is defined in \eqref{eq:p}
$$
P=H^{-1} -H^{-1}X(X^TH^{-1}X)^{-1}X^TH^{-1}
$$
then we have
\begin{equation}
P=L_2(L_2^THL_2)^{-1}L_2^T. \label{eq:L2HL2}
\end{equation} \begin{equation}
(XH^{-1}X)^{-1}=L_1^THL_1-L_1^THL_2(L_2^THL_2)^{-1}L_2^THL_1. \label{eq:XHX}
 \end{equation}

\end{corollary}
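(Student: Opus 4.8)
The plan is to rewrite everything in the basis furnished by the nonsingular matrix $L=[L_1,L_2]$ of Lemma~\ref{lemma:2} and then to reduce both identities to the block–inversion (Schur complement) formula. The single algebraic fact I would start from is that for any nonsingular $L$ and any positive definite $H$,
\[
H^{-1}=L\,(L^{T}HL)^{-1}L^{T},
\]
since $(L^{T}HL)^{-1}=L^{-1}H^{-1}L^{-T}$. Put $V:=L^{T}HL$, which is positive definite, and partition it into blocks $V_{11}=L_1^{T}HL_1$, $V_{12}=L_1^{T}HL_2=V_{21}^{T}$, $V_{22}=L_2^{T}HL_2$; here $V_{11}$ and $V_{22}$ are invertible, being principal submatrices of a positive definite matrix. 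Write $W:=V^{-1}$ with blocks $W_{ij}$.

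Next I would bring in the two defining properties $L_1^{T}X=I_p$ and $L_2^{T}X=0$ of Lemma~\ref{lemma:2}, equivalently $X^{T}L=(I_p,\,0)$. Substituting $H^{-1}=LWL^{T}$ then gives $H^{-1}X=L_1W_{11}+L_2W_{21}$, $X^{T}H^{-1}=W_{11}L_1^{T}+W_{12}L_2^{T}$, and in particular $X^{T}H^{-1}X=W_{11}$. Inverting this and using the Schur–complement expression for the $(1,1)$-block of $V^{-1}$ yields $(X^{T}H^{-1}X)^{-1}=W_{11}^{-1}=V_{11}-V_{12}V_{22}^{-1}V_{21}$, which is exactly \eqref{eq:XHX}.

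For \eqref{eq:L2HL2} I would substitute the three expressions above into $P=H^{-1}-H^{-1}X(X^{T}H^{-1}X)^{-1}X^{T}H^{-1}$. Expanding $H^{-1}=LWL^{T}$ into its four blocks and subtracting $(L_1W_{11}+L_2W_{21})W_{11}^{-1}(W_{11}L_1^{T}+W_{12}L_2^{T})$, every term carrying a factor $L_1$ cancels against its counterpart, and what is left is
\[
P=L_2\bigl(W_{22}-W_{21}W_{11}^{-1}W_{12}\bigr)L_2^{T}.
\]
The bracket is the Schur complement of $W_{11}$ inside $W=V^{-1}$, and the standard identity that this Schur complement equals $(V_{22})^{-1}$ then gives $P=L_2(L_2^{T}HL_2)^{-1}L_2^{T}$, i.e.\ \eqref{eq:L2HL2}.

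The block multiplications are routine; the one step that deserves genuine care is the ``inverse of an inverse'' Schur identity that the Schur complement of $W_{11}$ in $V^{-1}$ is $(V_{22})^{-1}$, which I would either cite or verify in a line from $W_{22}=(V_{22}-V_{21}V_{11}^{-1}V_{12})^{-1}$, $W_{12}=-V_{11}^{-1}V_{12}W_{22}$, $W_{21}=-W_{22}V_{21}V_{11}^{-1}$. A more coordinate–free alternative for \eqref{eq:L2HL2} would be to verify directly that $Q:=L_2(L_2^{T}HL_2)^{-1}L_2^{T}$ is symmetric, satisfies $QX=0$ and $QHQ=Q$ and has rank $n-p$, that $P$ shares these properties, and that such a matrix is unique; but the block computation above is shorter and, as a bonus, produces \eqref{eq:XHX} on the way, mirroring the style of the proof of Lemma~\ref{lemma:2}.
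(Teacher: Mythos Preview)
Your argument is correct, but it takes a genuinely different route from the paper. The paper proves \eqref{eq:L2HL2} first by a change of variable: it sets $\hat X=H^{-1/2}X$, observes that $H^{1/2}L_2$ is full column rank and annihilates $\hat X$, applies part~3 of Lemma~\ref{lemma:2} to $\hat X$ (the case $H=I$) to obtain $I-\hat X(\hat X^T\hat X)^{-1}\hat X^T=H^{1/2}L_2(L_2^THL_2)^{-1}L_2^TH^{1/2}$, and then multiplies by $H^{-1/2}$ on both sides. It then deduces \eqref{eq:XHX} from \eqref{eq:L2HL2} by computing $L_1^THL_1-L_1^TH\,P\,HL_1$ and using $HPH=H-X(X^TH^{-1}X)^{-1}X^T$ together with $L_1^TX=I_p$. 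By contrast, you work entirely in the $L$-basis via Schur complements of $V=L^THL$ and $W=V^{-1}$, and you obtain \eqref{eq:XHX} before \eqref{eq:L2HL2}. The paper's square-root trick is shorter for \eqref{eq:L2HL2} because it recycles Lemma~\ref{lemma:2} wholesale; your block-matrix route is more self-contained and makes the Schur-complement structure explicit, at the price of invoking the ``Schur complement of the inverse'' identity $W_{22}-W_{21}W_{11}^{-1}W_{12}=V_{22}^{-1}$ (which, as you note, is just the usual Schur formula applied to $W$ rather than $V$).
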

\begin{proof}
Since $H$ is symmetric positive definite, then there exists a symmetric positive definite matrix $H^{1/2}$, Let $\hat{X}=H^{-1/2}X$, then $\hat{X}$ is full rank and $L_2^TH^{1/2}\hat{X}=0$. According to Lemma~\ref{lemma:2}, we have
\[
I-\hat{X}(\hat{X}^T\hat{X})^{-1}\hat{X}=H^{1/2}L_2(L_2^THL_2)^{-1}L_2^T
\]
Multiply $H^{-1/2}$ on both sides, we obtain \eqref{eq:L2HL2}. For \eqref{eq:XHX}, we have
\begin{align*}
 & L_1^THL_1 -L_1^TH \underbrace{L_2(L_2^THL_2)^{-1}L_2^T}_{P}HL_1 \\
 & = L_1^THL_1 -L_1^T(H-X(X^TH^{-1}X)^{-1}X^T)L_1  \\
 & =\underbrace{L_1^TX}_{I_{p\times p}}(X^TH^{-1}X)^{-1}\underbrace{X^TL_1}_{I_{p\times p}}=(X^TH^{-1}X)^{-1}.
\end{align*}
\end{proof}

For $X\in
\mathbb{R}^{n\times p}$, Let $L=[L_1, L_2]$ be the linear transformation defined in Lemma~\ref{lemma:2}.
such that $L_1^TX=I_p$ and $L_2^TX=0$. Use this transform, we obtain
\begin{equation}
L^Ty=\begin{pmatrix}
y_1 \\ y_2
\end{pmatrix}
\sim
N\left( \begin{pmatrix}
\tau \\0
\end{pmatrix}, \sigma^2
\begin{pmatrix}
L_1^THL_1 & L_1^THL_2 \\
L_2^THL_1 & L_2^THL_2
\end{pmatrix}
 \right).
 \label{eq:y1y2}
\end{equation}
According to the result in \cite[p40, Thm
2.44]{all04}, the marginal distribution of $y_2$ is given as
$y_2\sim N(0,\sigma^2L_2^THL_2).$
The associated likelihood function corresponding to $y_2$ is
\begin{multline}
\ell_R= \ell_2=\ell(\sigma^2,\phi;y_2)
=-\frac{1}{2}\{(n-p)\log(2\pi \sigma^2)   \\
+\log|L_2^THL_2|+y_2^TL_2(L_2^THL_2)^{-1}L_2^Ty_2)/\sigma^2 \}.
\label{eq:l2}
\end{multline}
This form is equivalent to \eqref{eq:ellR}.
\begin{theorem}
The residual log-likelihood for the linear model in
\eqref{eq:l2} is equivalent to
\begin{align}
\ell_{R} =& -\frac{1}{2}\left\{ (n-p)\log ( \sigma^2)+\log |H|
+\log|X^TH^{-1}X| \right\}   \nonumber \\
 &-\frac{1}{2}y^TPy/\sigma^2 +\mathrm{const} .
\end{align}
where
\begin{equation}
P=H^{-1}-H^{-1}X(X^TH^{-1}X)^{-1}X^TH^{-1}. \label{eq:PHX}
\end{equation}

\end{theorem}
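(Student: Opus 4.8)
The plan is to start from the marginal log-likelihood $\ell_R=\ell_2$ in \eqref{eq:l2}, which is written in terms of the transformed quantities $y_2=L_2^Ty$ and the matrix $L_2^THL_2$, and to rewrite each of its three nonconstant pieces --- the log-determinant $\log|L_2^THL_2|$ and the quadratic form $y_2^TL_2(L_2^THL_2)^{-1}L_2^Ty_2$ --- in terms of the original data $y$, the design matrix $X$, and $H$. The quadratic term is immediate: by the Corollary, $P=L_2(L_2^THL_2)^{-1}L_2^T$, so $y_2^TL_2(L_2^THL_2)^{-1}L_2^Ty_2 = y^T L_2(L_2^THL_2)^{-1}L_2^T y = y^TPy$, which matches the last term of the claimed formula. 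So the real work is the determinant identity $\log|L_2^THL_2| = \log|H| + \log|X^TH^{-1}X| + \mathrm{const}$, where the constant depends only on $L$ (and hence not on $\theta$).

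For the determinant, the key step is to assemble the block matrix $L^THL$ and take its determinant two ways. First, $\det(L^THL) = \det(L)^2\det(H)$, and $\det(L)$ is a nonzero constant independent of $\theta$ (since $L$ depends only on $X$), so $\log\det(L^THL) = \log\det(H) + \mathrm{const}$. Second, using the block structure
\[
L^THL=\begin{pmatrix} L_1^THL_1 & L_1^THL_2 \\ L_2^THL_1 & L_2^THL_2 \end{pmatrix},
\]
the Schur-complement factorization gives $\det(L^THL) = \det(L_2^THL_2)\cdot\det\!\big(L_1^THL_1 - L_1^THL_2(L_2^THL_2)^{-1}L_2^THL_1\big)$. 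By \eqref{eq:XHX} in the Corollary, the Schur complement here is exactly $(X^TH^{-1}X)^{-1}$, so $\det(L^THL) = \det(L_2^THL_2)/\det(X^TH^{-1}X)$. Equating the two expressions for $\det(L^THL)$ and taking logarithms yields $\log|L_2^THL_2| = \log|H| + \log|X^TH^{-1}X| + \mathrm{const}$, as required; absorbing all $\theta$-independent constants (including the $(n-p)\log(2\pi)$ term and $\log\det(L)^2$) into $\mathrm{const}$ finishes the proof.

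The only mild obstacle is bookkeeping: one must be careful that $L$ is merely nonsingular (not orthogonal), so $\det(L)\ne 0$ but is not $\pm 1$ --- this is harmless since it only contributes to the additive constant --- and one should note that $L_2^THL_2$ is invertible because $H$ is positive definite and $L_2$ has full column rank, which justifies the Schur factorization. Everything else is a direct application of the two identities \eqref{eq:L2HL2} and \eqref{eq:XHX} already established in the Corollary, so no new linear-algebra input is needed.
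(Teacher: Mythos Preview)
Your proposal is correct and follows essentially the same route as the paper: both handle the quadratic form via \eqref{eq:L2HL2} and obtain the determinant identity by computing $\det(L^THL)$ two ways---once as $\det(H)\det(L^TL)$ and once via a Schur-complement/block-elimination that produces the factor $\det(L_2^THL_2)\det((X^TH^{-1}X)^{-1})$ using \eqref{eq:XHX}. Your version is in fact a bit more explicit about why $L_2^THL_2$ is invertible and why the $L$-dependent term is constant in $\theta$, but the argument is the same.
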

\begin{proof}
 First we notice that $P=L_2(L_2^THL_2)^{-1}L_2$ \eqref{eq:L2HL2}), and \eqref{eq:XHX}.
Then we use the identity
\begin{align*}
 & \left\vert \begin{pmatrix}
I_p & -(L_1^THL_2)(L_2^THL_2)^{-1} \\
0 &I_{n-p}
\end{pmatrix}
\begin{pmatrix}
L_1^THL_1 & L_1^THL_2 \\
L_2^THL_1 & L_2^THL_2
\end{pmatrix} \right\vert \\
& =\left\vert
\begin{pmatrix}
(X^TH^{-1}X)^{-1} & 0\\
L_2^THL_1 & L_2^THL_2
\end{pmatrix}\right\vert=\vert L^THL \vert,
\end{align*}
We have $|L^THL|=|H||L^TL|=|(X^TH^{-1}X)^{-1}||L_2^THL_2|$ and
\begin{equation}
\log|L^TL|+\log|H|=\log|L_2^THL_2^T|-\log|X^TH^{-1}X|.
\end{equation}
Note the construction of $L$ does not depend on $\sigma^2$ and
$\phi$, therefore $\log|L^TL|$ is a constant. %\flushright \qed
\end{proof}

\subsection{The score functions for the restricted log-likelihood}

\label{sec:comput}
To derive the scores for the restricted log-likelihood, we shall use the formula \eqref{eq:l2} rather than \eqref{eq:ellR}. In general setting, when we present the restricted log-likelihood, we shall use \eqref{eq:ellR}, because it does not involve any intermediate variables like $L_2$.

\begin{lemma}
Let $A(\kappa)$ be a nonsingular parametric matrix, then we have

 \begin{align} \frac{\partial \log \lvert A \rvert }{\partial \kappa_i} &= \tr{A^{-1} \frac{A}{\partial \kappa_i}}, \label{eq:logdetA} \\
\frac{\partial A^{-1}}{ \partial \kappa_i} & = -A^{-1} \frac{\partial A}{ \partial \kappa_i} A^{-1} . \label{eq:A-1}
\end{align}

\end{lemma}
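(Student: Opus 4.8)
The plan is to prove the two identities by elementary perturbation arguments, treating each separately. For \eqref{eq:A-1} I would start from the defining identity $A(\kappa)A(\kappa)^{-1}=I$, which holds identically in $\kappa$ since $A$ is nonsingular. First I would note that $\kappa\mapsto A^{-1}(\kappa)$ is differentiable wherever $A$ is and $\det A\neq 0$: by Cramer's rule the entries of $A^{-1}$ are ratios of polynomials in the entries of $A$ with nonvanishing denominator $\det A$, hence differentiable functions of $\kappa$. Differentiating $AA^{-1}=I$ with respect to $\kappa_i$ and using the product rule for matrix-valued maps gives $\tfrac{\partial A}{\partial\kappa_i}A^{-1}+A\tfrac{\partial A^{-1}}{\partial\kappa_i}=0$. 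Left-multiplying by $A^{-1}$ and rearranging yields $\tfrac{\partial A^{-1}}{\partial\kappa_i}=-A^{-1}\tfrac{\partial A}{\partial\kappa_i}A^{-1}$, which is \eqref{eq:A-1}.

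For \eqref{eq:logdetA} I would establish Jacobi's formula $\tfrac{\partial\det A}{\partial\kappa_i}=\det(A)\,\tr\!\bigl(A^{-1}\tfrac{\partial A}{\partial\kappa_i}\bigr)$ and then divide through by $\det A\neq 0$, since $\tfrac{\partial\log|A|}{\partial\kappa_i}=\tfrac{1}{\det A}\tfrac{\partial\det A}{\partial\kappa_i}$ by the chain rule. Two routes to Jacobi's formula are available. The perturbation route: write $\det\!\bigl(A+h\tfrac{\partial A}{\partial\kappa_i}\bigr)=\det(A)\det\!\bigl(I+hA^{-1}\tfrac{\partial A}{\partial\kappa_i}\bigr)$ and use the first-order expansion $\det(I+hB)=1+h\,\tr B+O(h^2)$, so that differentiating at $h=0$ gives the claim. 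The cofactor route: by the adjugate formula $\tfrac{\partial\det A}{\partial A_{jk}}=\operatorname{adj}(A)_{kj}=\det(A)(A^{-1})_{kj}$, and the chain rule summed over all entries gives $\tfrac{\partial\det A}{\partial\kappa_i}=\sum_{j,k}\det(A)(A^{-1})_{kj}\tfrac{\partial A_{jk}}{\partial\kappa_i}=\det(A)\,\tr\!\bigl(A^{-1}\tfrac{\partial A}{\partial\kappa_i}\bigr)$.

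The computations are routine, and the only step with genuine content is the expansion $\det(I+hB)=1+h\,\tr B+O(h^2)$ (equivalently, the value of $\partial\det/\partial A_{jk}$). I would spell this one out in full — either by expanding $\prod_j(1+h\lambda_j)$ over the eigenvalues of $B$, or directly from the Leibniz expansion of the determinant, observing that the unique term linear in $h$ arises from choosing the diagonal factors $1+hB_{jj}$. Everything else — the product rule for matrix-valued functions, differentiability of the inverse via Cramer's rule, and the chain rule for $\log$ — is standard and would simply be invoked.
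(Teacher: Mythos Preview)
Your proposal is correct and complete. The paper itself does not give a proof at all: it simply cites Harville's textbook for both identities (\cite[p.305, eq.~8.6]{Har97} for \eqref{eq:logdetA} and \cite[p.307, eq.~8.15]{Har97} for \eqref{eq:A-1}). So your write-up supplies strictly more than the paper does. The arguments you outline --- differentiating $AA^{-1}=I$ for the inverse, and Jacobi's formula via either the perturbation expansion $\det(I+hB)=1+h\,\tr B+O(h^2)$ or the cofactor/adjugate identity for the determinant --- are the standard textbook derivations and are exactly what one finds in the cited reference.
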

\begin{IEEEproof}
See \cite[p.305, eq.8.6]{Har97} for \eqref{eq:logdetA} and \cite[p.307, eq. 8.15]{Har97} for \eqref{eq:A-1}
\end{IEEEproof}

\begin{theorem}[\cite{GTC95}]
Let $X\in \mathbb{R}^{n\times p}$ be full rank. The scores of the residual
log-likelihood  the linear model in \eqref{eq:LM} is given by
\begin{align}
s(\sigma^2)&=\frac{\partial \ell_R}{\partial \sigma^2}
=-\frac{1}{2}\left\{
\frac{n-p}{\sigma^2}-\frac{y^TPy}{\sigma^4}\right \}, \\
s(\kappa_i)&=\frac{\partial \ell_R}{\partial
\kappa_i}=-\frac{1}{2}\left\{ \tr(P \dot{H}_i) -
\frac{1}{\sigma^2}y^TP \dot{H}_iP y  \right\},
\end{align}
where $P$ is defined in \eqref{eq:PHX}, and $\dot{H}_i=\frac{\partial{H}_i}{\partial \kappa_i}$
\end{theorem}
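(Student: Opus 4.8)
The plan is to differentiate the second representation \eqref{eq:l2} of $\ell_R$ rather than \eqref{eq:ellR}, because there the whole dependence on $\theta=(\sigma^2,\kappa)$ is funnelled through the single matrix $M:=L_2^THL_2$: the transformation $L_2$ of Lemma~\ref{lemma:2} is built from $X$ only, so $\partial L_2/\partial\sigma^2=\partial L_2/\partial\kappa_i=0$. Writing $y_2=L_2^Ty$, \eqref{eq:l2} takes the shape
\[
\ell_R=\mathrm{const}-\tfrac12\Big\{(n-p)\log\sigma^2+\log\det M+\tfrac{1}{\sigma^2}\,y_2^TM^{-1}y_2\Big\},
\]
and the corollary's identity \eqref{eq:L2HL2}, namely $P=L_2M^{-1}L_2^T$, is the bridge that converts answers phrased in $M$ and $L_2$ into the claimed $P$-forms; in particular $y_2^TM^{-1}y_2=y^TL_2M^{-1}L_2^Ty=y^TPy$.

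For the first score, differentiate in $\sigma^2$: only $(n-p)\log\sigma^2$ and $\sigma^{-2}y_2^TM^{-1}y_2$ contribute, giving $s(\sigma^2)=-\tfrac12\{(n-p)/\sigma^2-y_2^TM^{-1}y_2/\sigma^4\}$, and replacing $y_2^TM^{-1}y_2$ by $y^TPy$ yields the stated expression.

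For the second score, differentiate in $\kappa_i$. Since $L_2$ is constant, the chain rule gives $\partial M/\partial\kappa_i=L_2^T\dot H_iL_2$ with $\dot H_i=\partial H/\partial\kappa_i$. Identity \eqref{eq:logdetA} then gives $\partial\log\det M/\partial\kappa_i=\tr(M^{-1}L_2^T\dot H_iL_2)$, and cycling the trace together with \eqref{eq:L2HL2} turns this into $\tr(L_2M^{-1}L_2^T\dot H_i)=\tr(P\dot H_i)$. For the quadratic term, \eqref{eq:A-1} gives $\partial M^{-1}/\partial\kappa_i=-M^{-1}L_2^T\dot H_iL_2M^{-1}$, hence
\[
\frac{\partial}{\partial\kappa_i}\big(y_2^TM^{-1}y_2\big)=-y_2^TM^{-1}L_2^T\dot H_iL_2M^{-1}y_2=-y^TP\dot H_iPy,
\]
again by \eqref{eq:L2HL2}. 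Assembling the two contributions with the common factor $-\tfrac12$ gives $s(\kappa_i)=-\tfrac12\{\tr(P\dot H_i)-\sigma^{-2}y^TP\dot H_iPy\}$.

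I do not anticipate a genuine obstacle; the only discipline required is bookkeeping --- remembering that it is $L_2$, not $M$, that is $\theta$-free, pushing the chain rule cleanly through $M=L_2^THL_2$, and invoking the collapse $L_2M^{-1}L_2^T=P$ each time to erase all trace of $L_2$. Differentiating \eqref{eq:ellR} directly would instead demand the auxiliary identities $\partial(\log\det H+\log\det(X^TH^{-1}X))/\partial\kappa_i=\tr(P\dot H_i)$ and $\partial P/\partial\kappa_i=-P\dot H_iP$, which is precisely the extra labour that the $L_2$-representation avoids.
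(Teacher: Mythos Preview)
Your proposal is correct and follows essentially the same route as the paper: differentiate the $L_2$-representation \eqref{eq:l2}, apply \eqref{eq:logdetA} and \eqref{eq:A-1} to $M=L_2^THL_2$, and then collapse every occurrence of $L_2M^{-1}L_2^T$ to $P$ via \eqref{eq:L2HL2}. The paper's argument is identical in substance, merely without the abbreviation $M$; your final paragraph even anticipates the paper's stated reason for preferring \eqref{eq:l2} over \eqref{eq:ellR}.
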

\begin{proof}
Consider the residual loglikelihood function in \eqref{eq:l2},
it follows that $s(\sigma^2)=\frac{\partial \ell_R}{\partial
\sigma^2}$.
\begin{equation}
s(\kappa_i)= -\frac{1}{2} \left\{  \frac{\partial\log |L_2^THL_2|}{\partial \kappa_i}+
\frac{1}{\sigma^2}\frac{\partial (y^TPy)}{\partial \kappa_i} \right \}.
\label{eq:sk}
\end{equation}
 Using the fact on matrix derivatives of log determinant in \eqref{eq:logdetA}
and the property of the trace
operation $\tr(AB)=\tr(BA)$
\begin{align}
&\frac{\partial \log(\lvert L_2^THL_2\rvert)}{\partial \kappa_i} =
\tr\left((L_2HL_2)^{-1} \frac{\partial (L_2^THL_2)}{\partial
\kappa_i}\right)
\nonumber \\
&=\tr\left( \underbrace{L_2(L_2^THL_2)^{-1} L_2}_{=P}
\dot{H}_i\right)=\tr\left(P\dot{H}_i\right).
\end{align}
 One easy way to calculate the second term in \eqref{eq:sk} is
 to use the relationship
$$ P=L_2(L_2^THL_2)^{-1}L_2^T$$
and the result on matrix derivatives of an inverse matrix
\eqref{eq:A-1}
\begin{equation*}
\frac{\partial H^{-1}}{ \partial \kappa_i}
=-H^{-1}\frac{\partial H}{\partial \kappa_i} H^{-1}.
\end{equation*}
We have
\begin{align}
 &\frac{\partial (L_2(L_2^THL_2)^{-1}L_2^T)}{\partial
\kappa_i}=L_2\frac{\partial (L_2^THL_2)^{-1}}{\partial
\kappa_i}L_2^T  \nonumber \\
& =-L_2(L_2^THL_2)^{-1}\frac{\partial (L_2^THL_2)}{\partial
\kappa_i} (L_2^THL_2)^{-1}L_2^T \nonumber \\
& =-\underbrace{L_2(L_2^THL_2)^{-1}L_2^T}_{=P}\dot{H}_i
\underbrace{L_2(L_2^THL_2)^{-1}L_2^T}_{=P}
\nonumber \\& =-P\dot{H}_iP
=\dot{P}_i \label{eq:dpxh}.
\end{align}
\end{proof}

\section{Derive the information matrices}

\label{sec:info}

As the score vector for the restricted log-likelihood available, the observed information and the Fisher information can be derived by definition with the help of some matrix algebra operations.
\begin{lemma}
Let $y\sim N(X\tau, \sigma^2H)$
be a random variable
and $H$ is
symmetric positive definite matrix,where $\operatorname{rank}(X)=\nu$, then
$$P=H^{-1}-H^{-1}X(X^TH^{-1}X)^{-1}XH^{-1}$$ is a weighted projection
matrix such that
\begin{enumerate}
  \item $PX=0$;
  \item $PHP=P$;
  \item $\tr(PH)=n-\nu$;
  \item $PE(yy^T)=\sigma^2PH$.
\end{enumerate}
\label{lem:P}
\end{lemma}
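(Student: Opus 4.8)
The plan is to verify the four properties of $P = H^{-1} - H^{-1}X(X^TH^{-1}X)^{-1}X^TH^{-1}$ in turn, using only elementary matrix algebra and the distributional assumption $y \sim N(X\tau, \sigma^2 H)$.

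First, for property (1), I would right-multiply $P$ by $X$ and observe that $H^{-1}X(X^TH^{-1}X)^{-1}X^TH^{-1}X = H^{-1}X(X^TH^{-1}X)^{-1}(X^TH^{-1}X) = H^{-1}X$, so $PX = H^{-1}X - H^{-1}X = 0$; this uses full rank of $X$ so that $X^TH^{-1}X$ is invertible. For property (2), I would compute $PHP$ by first forming $PH = I - H^{-1}X(X^TH^{-1}X)^{-1}X^T$, then multiplying on the right by $P$; the term $H^{-1}X(X^TH^{-1}X)^{-1}X^T P$ vanishes because $X^TP = (PX)^T = 0$ by property (1) together with symmetry of $P$ (which itself follows from symmetry of $H^{-1}$), leaving $PHP = P$. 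Alternatively one may invoke the representation $P = L_2(L_2^THL_2)^{-1}L_2^T$ from the Corollary following Lemma~\ref{lemma:2}, from which (1) and (2) are immediate since $L_2^TX = 0$ and $(L_2^THL_2)^{-1}(L_2^THL_2) = I$.

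For property (3), I would use the cyclic invariance of the trace: $\tr(PH) = \tr(I - H^{-1}X(X^TH^{-1}X)^{-1}X^T) = n - \tr((X^TH^{-1}X)^{-1}X^TH^{-1}X) = n - \tr(I_\nu) = n - \nu$, recalling $\operatorname{rank}(X) = \nu$ so $I_\nu$ is $\nu\times\nu$. For property (4), I would use $E(yy^T) = \operatorname{var}(y) + E(y)E(y)^T = \sigma^2 H + X\tau\tau^TX^T$, so $PE(yy^T) = \sigma^2 PH + PX\tau\tau^TX^T = \sigma^2 PH$, the second term dropping out by property (1).

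None of the steps presents a genuine obstacle; the only point requiring care is the bookkeeping in property (2), where one must be careful to use the symmetry of $P$ (so that $X^TP = 0$) rather than assume it, and to track which factors of $H$ and $H^{-1}$ cancel. The cleanest exposition is probably to establish symmetry of $P$ first, then derive (1), and then use both to get (2) and (4) with almost no computation.
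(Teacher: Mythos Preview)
Your proposal is correct and for items (1), (2), and (4) matches the paper's approach essentially verbatim (the paper simply says the first two ``can be verified directly by computation'' and handles (4) exactly as you do, via $E(yy^T)=\mathrm{var}(y)+E(y)E(y)^T$ and $PX=0$).

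The only point of divergence is item (3). You compute $\tr(PH)=\tr\bigl(I-H^{-1}X(X^TH^{-1}X)^{-1}X^T\bigr)$ and then use cyclic invariance to collapse $\tr\bigl((X^TH^{-1}X)^{-1}X^TH^{-1}X\bigr)=\tr(I_\nu)=\nu$. The paper instead introduces the square root $H^{1/2}$, sets $\hat X=H^{-1/2}X$, and writes $\tr(PH)=\tr(H^{1/2}PH^{1/2})=\tr\bigl(I-\hat X(\hat X^T\hat X)^{-1}\hat X^T\bigr)=n-\rank(\hat X)$. Your route is slightly more economical since it avoids the existence of $H^{1/2}$ altogether; the paper's route has the small conceptual payoff of exhibiting $H^{1/2}PH^{1/2}$ as a genuine orthogonal projection of rank $n-\nu$, which is where the phrase ``weighted projection matrix'' in the lemma statement comes from. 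Either argument is perfectly adequate here.
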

\begin{IEEEproof}
The first 2 terms can be verified by directly by computation. Since
H is a positive definite matrix, there exist $H^{1/2}$ such that
\begin{align*}
\tr(PH) & =\tr(H^{1/2}PH^{1/2})=\tr(I-\hat{X}(\hat{X}^T\hat{X})^{-1}
\hat{X}) \\
& =n-\rank(\hat{X})=n-\nu.
\end{align*}
where $\hat{X}=H^{-1/2}X$. The 4th item follows because
\begin{align*}
P
E(yy^T)=P(\mathrm{var}(y)-X\tau
(X\tau)^T)  \\
=\sigma^2PH-PX\tau(X\tau)^T=\sigma^2PH.
\end{align*}
\end{IEEEproof}

\begin{lemma}
Let $H$ be a parametric matrix of $\kappa$, and $X$ be an constant matrix, then the partial derivative of
the projection matrix $$ P=H^{-1}-H^{-1}X(X^TH^{-1}X)^{-1}XH^{-1} $$
with respect to $\kappa_i$ is
\begin{equation}
\dot{P}_i=-P\dot{H}_iP,
\end{equation}
where $\dot{P}_i=\frac{\partial P}{\partial \kappa_i}$ and $\dot{H}_i=\frac{\partial H}{\partial \kappa_i}.$
\label{lem:PD}
\end{lemma}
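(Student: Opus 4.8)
The plan is to differentiate the defining expression for $P$ directly, using only the two matrix-calculus identities collected in the earlier lemma, namely $\partial A^{-1}/\partial\kappa_i = -A^{-1}(\partial A/\partial\kappa_i)A^{-1}$ and $\partial\log|A|/\partial\kappa_i = \tr(A^{-1}\partial A/\partial\kappa_i)$ (only the first is needed here). Write $P = H^{-1} - H^{-1}X(X^TH^{-1}X)^{-1}X^TH^{-1}$ and apply the product rule term by term, treating $X$ as constant so that every derivative falls on a factor of $H^{-1}$ or on the inner matrix $X^TH^{-1}X$. Each such derivative is rewritten via $\partial H^{-1}/\partial\kappa_i = -H^{-1}\dot H_i H^{-1}$, so that after collecting terms every summand contains a single factor of $\dot H_i$ flanked by expressions built from $H^{-1}$, $X$, and $(X^TH^{-1}X)^{-1}$.

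The key observation that makes the bookkeeping collapse is that each of those flanking expressions is exactly $P$. Concretely, $H^{-1} - H^{-1}X(X^TH^{-1}X)^{-1}X^TH^{-1} = P$ appears on the left of $\dot H_i$, and its transpose — which equals $P$ since $P$ is symmetric — appears on the right; the cross terms coming from differentiating the inner $(X^TH^{-1}X)^{-1}$ combine with the others to complete these two copies of $P$. So first I would carefully list the terms: differentiating the leading $H^{-1}$ gives $-H^{-1}\dot H_i H^{-1}$; differentiating the first $H^{-1}$ in the second block, the inner inverse, and the last $H^{-1}$ gives three more terms. Then I would factor $\dot H_i$ out of the middle and recognise the left and right factors as $P$, yielding $\dot P_i = -P\dot H_i P$. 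An even cleaner route, if one prefers, is to use the identity $P = L_2(L_2^THL_2)^{-1}L_2^T$ from the Corollary: since $L_2$ does not depend on $\kappa$, one differentiates only the inner inverse, and the computation in \eqref{eq:dpxh} already carried this out, giving $\partial P/\partial\kappa_i = -L_2(L_2^THL_2)^{-1}L_2^T \dot H_i L_2(L_2^THL_2)^{-1}L_2^T = -P\dot H_i P$ in three lines.

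The only mild obstacle is the algebraic bookkeeping in the direct approach: one must be careful that the cross terms — those where the derivative hits the inner $(X^TH^{-1}X)^{-1}$ and produces a factor $X^TH^{-1}\dot H_i H^{-1}X$ sandwiched between two $(X^TH^{-1}X)^{-1}$'s — recombine correctly with the terms from differentiating the outer $H^{-1}$ factors. This is purely a matter of grouping and introduces no new idea, so I would present the $L_2$ route as the main argument (citing \eqref{eq:dpxh}) and note that it reproduces the result independently of any choice of $L_2$, with the direct differentiation of \eqref{eq:PHX} as an alternative verification. Either way the proof is short and self-contained given the lemmas already established.
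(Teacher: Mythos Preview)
Your proposal is correct and covers exactly the two arguments that appear in the paper. The paper's proof of this lemma takes the direct-differentiation route you describe first: expand $\partial P/\partial\kappa_i$ term by term via \eqref{eq:A-1}, obtain the four summands, and regroup them as $-(H^{-1}-H^{-1}X(X^TH^{-1}X)^{-1}X^TH^{-1})\dot H_i(H^{-1}-H^{-1}X(X^TH^{-1}X)^{-1}X^TH^{-1})=-P\dot H_iP$. The cleaner $L_2$ route you would prefer as the main argument is indeed already carried out verbatim earlier in the paper at \eqref{eq:dpxh}, so your preference simply swaps which of the two derivations is labelled the ``proof'' and which is the corroboration; there is no substantive difference.
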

\begin{IEEEproof}
Using the derivatives of the inverse of a matrix \eqref{eq:A-1},
we have
\begin{align*}
\dot{P}_i =& \frac{\partial}{\partial \kappa_i}(H^{-1}-H^{-1}X(X^TH^{-1}X)^{-1}X^TH^{-1}) \\
  = & -H^{-1}\dot{H}_iH^{-1}+H^{-1}\dot{H}_iH^{-1}X(X^TH^{-1}X)^{-1}X^TH^{-1} \\
& -H^{-1}X(X^TH^{-1}X)^{-1}X^TH^{-1}\dot{H}_i \times \\
& \quad H^{-1}X(X^TH^{-1}X)^{-1}X^TH^{-1} \\
& + H^{-1}X(X^TH^{-1}X)^{-1}X^TH^{-1}\dot{H}_iH^{-1} \\
=&-H^{-1}\dot{H}_i+H^{-1}X(X^TH^{-1}X)^{-1}X^TH^{-1}\dot{H}_iP\\
=&-P\dot{H}_iP.\\
\end{align*}
\end{IEEEproof}

\subsection{Formulas for the observed information}
\begin{theorem}
The element of the observed information matrix for the residual
log-likelihood \eqref{eq:ellR} is given by
\begin{align}
\I_O(\sigma^2,\sigma^2) &=
\frac{y^TPy}{\sigma^6}-\frac{n-p}{2\sigma^4}, \label{eq:ISS} \\
\I_O(\sigma^2,\kappa_i) &=
\frac{1}{2\sigma^4}y^TP\dot{H}_iPy , \label{eq:ISK}\\
\I_O(\kappa_i,\kappa_j) & = \frac{1}{2}\left\{\tr(P\dot{H}_{ij})-
\tr(P\dot{H}_iP \dot{H}_j)
 \right\}  \nonumber  \\
 & +\frac{1}{2\sigma^2}\left\{ 2y^TP\dot{H}_iP\dot{H}_j Py -y^T
 P\ddot{H}_{ij}Py\right\}.
\label{eq:IKK}
\end{align}
where $\dot{H}_i=\frac{\partial H}{\partial \kappa_i}$,
$\ddot{H}_{ij}=\frac{\partial^2 H}{\partial K_i \partial K_j}$.
\end{theorem}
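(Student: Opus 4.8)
The plan is to obtain each entry of $\I_O$ directly from its definition $\I_O=-\partial S/\partial\theta$ by differentiating the score functions $s(\sigma^2)$ and $s(\kappa_i)$ (already established in the previous theorem) one further time. The only ingredients needed are the facts that $P$ does not depend on $\sigma^2$, the derivative formula $\dot P_i=-P\dot H_iP$ from Lemma~\ref{lem:PD}, the cyclic invariance $\tr(AB)=\tr(BA)$, and the symmetry of $P$, $\dot H_i$ and $\ddot H_{ij}$, which permits transposing scalar quadratic forms at will.

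First I would handle the $\sigma^2$ row. For $\I_O(\sigma^2,\sigma^2)$, differentiate $s(\sigma^2)=-\tfrac12\{(n-p)/\sigma^2-y^TPy/\sigma^4\}$ with respect to $\sigma^2$, treating $y^TPy$ as constant, and negate; this yields \eqref{eq:ISS} at once. For $\I_O(\sigma^2,\kappa_i)$, differentiate the same expression with respect to $\kappa_i$: only the factor $y^TPy$ depends on $\kappa_i$, and by Lemma~\ref{lem:PD} one has $\partial(y^TPy)/\partial\kappa_i=y^T\dot P_iy=-y^TP\dot H_iPy$; negating gives \eqref{eq:ISK}.

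The substantive step is $\I_O(\kappa_i,\kappa_j)=-\partial s(\kappa_i)/\partial\kappa_j$, where $s(\kappa_i)=-\tfrac12\{\tr(P\dot H_i)-\tfrac1{\sigma^2}y^TP\dot H_iPy\}$. I apply the product rule twice. On the trace term, $\partial_{\kappa_j}\tr(P\dot H_i)=\tr(\dot P_j\dot H_i)+\tr(P\ddot H_{ij})=-\tr(P\dot H_jP\dot H_i)+\tr(P\ddot H_{ij})$, and by cyclic invariance $\tr(P\dot H_jP\dot H_i)=\tr(P\dot H_iP\dot H_j)$. On the quadratic-form term, $\partial_{\kappa_j}(y^TP\dot H_iPy)=y^T\dot P_j\dot H_iPy+y^TP\ddot H_{ij}Py+y^TP\dot H_i\dot P_jy=-y^TP\dot H_jP\dot H_iPy+y^TP\ddot H_{ij}Py-y^TP\dot H_iP\dot H_jPy$; the two outer terms are transposes of one another, hence equal since all factors are symmetric, so they combine into $-2\,y^TP\dot H_iP\dot H_jPy$. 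Collecting everything and multiplying by $-\tfrac12$ produces \eqref{eq:IKK} after rearranging signs.

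The one place demanding care is this last computation: one must not drop the $\ddot H_{ij}$ contributions that arise from differentiating $\dot H_i$ (these are exactly what will later distinguish the observed information from the Fisher information), and one must track the three-factor product when differentiating $y^TP\dot H_iPy$, using symmetry to merge the two cross terms. Everything else is routine bookkeeping; no matrix identity beyond $\dot P_i=-P\dot H_iP$ (Lemma~\ref{lem:PD}, itself a consequence of \eqref{eq:A-1}) is required.
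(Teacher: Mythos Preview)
Your proposal is correct and follows essentially the same route as the paper: both differentiate the score functions directly, invoke Lemma~\ref{lem:PD} for $\dot P_j=-P\dot H_jP$, and use the symmetry of $P,\dot H_i,\dot H_j$ to merge the two cross terms in the quadratic form. The paper's presentation is terser but the logical structure is identical.
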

 \begin{proof}
 The result in \eqref{eq:ISS} is standard according to the definition. The result in
 \eqref{eq:ISK} follows according to the result in Lemma \ref{lem:PD}. If one uses the score in \eqref{eq:ssigma}.
 The first term in \eqref{eq:IKK} follows because
 \begin{align*}
 \frac{\partial \tr(P\dot{H}_i)}{\partial
 \kappa_j}&=tr(P\ddot{H}_{ij})+\tr(\dot{P}_j\dot{H}_i) \quad ( \dot{P}_j=-PH_jP)\\
 &=\tr(P\ddot{H}_{ij})-\tr(P\dot{H}_jP\dot{H}_i).
 \end{align*}
 The second term in \eqref{eq:IKK} follows according to the result in Lemma \ref{lem:PD}.
 \begin{equation}
 -\frac{\partial (P\dot{H}_iP)}{\partial
 \kappa_j}=P\dot{H}_jP\dot{H}_iP-P\ddot{H}_{ij}P+P\dot{H}_iP\dot{H}_jP.
 \end{equation}
 Further note that $\dot{H}_i$, $\dot{H}_j$ and $P$ are symmetric.
The second term in \eqref{eq:IKK} follows because of
 $$y^TP\dot{H}_iP\dot{H}_jPy=y^TP\dot{H}_jP\dot{H}_iPy. $$

 \end{proof}
\subsection{Formulas of the Fisher information matrix}

\begin{table*}[t!]
\centering
 \caption{comparison between the observed, Fisher and averaged splitting information}
 \label{tab:splitting}
\begin{tabular}{llll}
  \hline
  % after \\: \hline or \cline{col1-col2} \cline{col3-col4} ...
  index & $ \I_O$ & $\I $ & $\I_A$  \\
  $(\sigma^2, \sigma^2)$ &
  $\frac{y^TPy}{\sigma^6}-\frac{n-\nu}{2\sigma^2}$ & $\frac{n-\nu}{2\sigma^4}$
  & $\frac{y^TPy}{2\sigma^6}$  \\
  $(\sigma^2,\kappa_i)$ & $\frac{y^TPH_iPy}{2\sigma^4}$&
  $\frac{\tr(PH_i)}{2\sigma^2}$ &$\frac{y^TPH_iPy}{2\sigma^4}$\\
  $(\kappa_i, \kappa_j)$ & $\I_O(\kappa_i,\kappa_j) $
  & $\frac{\tr(PH_iPH_j)}{2}$
  & $\frac{y^TPH_iPH_jPy}{2\sigma^2}$\\
 \\
& \multicolumn{3}{c}{ $\I_O(\kappa_i,\kappa_j) = \frac{\tr(PH_{ij})- \tr(PH_iP H_j)}{2} +\frac{2y^TPH_iPH_j Py -y^T
    PH_{ij}Py}{2\sigma^2}, $ }\\
& $H_i=\frac{\partial H_i}{\partial \kappa_i}$, &
$H_{ij}=\frac{\partial^2H}{\partial H_i\partial H_j}$ \\
  \hline
\end{tabular}
\end{table*}

The \textit{Fisher information matrix}, $\I$, is the
expect value of the observed information matrix, $ \I=E(\I_O).$
According to such a definition, with some calculation, we have
\begin{theorem}
The element of the Fisher information matrix for the residual
log-likelihood function in \eqref{eq:ellR} is given by
\begin{align}
\I(\sigma^2,\sigma^2) &=E(\I_O(\sigma^2,\sigma^2))=\frac{\tr(PH)}{2\sigma^4}=\frac{n-\nu}{2\sigma^4}, \label{eq:FISS}\\
\I(\sigma^2, \kappa_i)
&=E(\I_O(\sigma^2,\kappa_i))=\frac{1}{2\sigma^2}
\tr(P\dot{H}_i),\label{eq:FISK} \\
\I(\kappa_i, \kappa_j)
&=E(\I_O(\kappa_i,\kappa_j))=\frac{1}{2}\tr(P\dot{H}_iP\dot{H}_j)
.\label{eq:FIKK}
\end{align}
\end{theorem}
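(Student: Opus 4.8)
The plan is to obtain each entry of $\I$ by taking the expectation of the corresponding entry of the observed information matrix recorded in \eqref{eq:ISS}--\eqref{eq:IKK}. The two ingredients are: (i) for $y\sim N(X\tau,\sigma^2H)$ and any constant symmetric matrix $A$,
\[
E(y^TAy)=\sigma^2\tr(AH)+\tau^TX^TAX\tau;
\]
and (ii) the projection identities of Lemma~\ref{lem:P}, namely $PX=0$, $PHP=P$ and $\tr(PH)=n-\nu$. Every quadratic form occurring in \eqref{eq:ISS}--\eqref{eq:IKK} is of the shape $y^TAy$ with $A$ carrying $P$ as a left factor, so $AX=0$ and the mean term $\tau^TX^TAX\tau$ drops out, leaving $E(y^TAy)=\sigma^2\tr(AH)$; after one cyclic rotation of the trace an adjacent block $PHP$ can always be collapsed to $P$.

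First I would dispatch the $(\sigma^2,\sigma^2)$ entry. From \eqref{eq:ISS}, $E(\I_O(\sigma^2,\sigma^2))=\sigma^{-6}E(y^TPy)-\tfrac{n-p}{2\sigma^4}=\sigma^{-4}\tr(PH)-\tfrac{n-p}{2\sigma^4}$, and Lemma~\ref{lem:P}(3) with $\nu=p$ turns this into $\tfrac{n-p}{2\sigma^4}$. Next, for the $(\sigma^2,\kappa_i)$ entry, \eqref{eq:ISK} gives $E(\I_O(\sigma^2,\kappa_i))=(2\sigma^4)^{-1}E(y^TP\dot H_iPy)=(2\sigma^2)^{-1}\tr(P\dot H_iPH)$, and $\tr(P\dot H_iPH)=\tr(\dot H_i\,PHP)=\tr(P\dot H_i)$ by Lemma~\ref{lem:P}(2), which is the claimed value.

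The bulk of the argument is the $(\kappa_i,\kappa_j)$ entry. In \eqref{eq:IKK} the deterministic trace terms $\tfrac12\{\tr(P\ddot H_{ij})-\tr(P\dot H_iP\dot H_j)\}$ pass through the expectation untouched, while the two quadratic forms become, using $E(y^TPMPy)=\sigma^2\tr(PMPH)$ followed by $PHP=P$,
\begin{align*}
E(y^TP\dot H_iP\dot H_jPy) &= \sigma^2\tr(P\dot H_iP\dot H_j),\\
E(y^TP\ddot H_{ij}Py) &= \sigma^2\tr(P\ddot H_{ij}).
\end{align*}
Substituting into $(2\sigma^2)^{-1}\{2E(y^TP\dot H_iP\dot H_jPy)-E(y^TP\ddot H_{ij}Py)\}$ yields $\tr(P\dot H_iP\dot H_j)-\tfrac12\tr(P\ddot H_{ij})$; adding the deterministic part cancels the two $\tr(P\ddot H_{ij})$ contributions and halves $\tr(P\dot H_iP\dot H_j)$, leaving $\tfrac12\tr(P\dot H_iP\dot H_j)$.

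The only real obstacle is bookkeeping rather than analysis: one must verify in each line that the matrix sandwiched by $y^T(\cdot)y$ genuinely has $P$ as an outer factor so that Lemma~\ref{lem:P}(1) removes the fixed-effect contribution, and apply the cyclic invariance of the trace in the order that exposes an $PHP$ block to be collapsed via Lemma~\ref{lem:P}(2). No estimates beyond Lemma~\ref{lem:P} and the standard formula for the mean of a Gaussian quadratic form are required.
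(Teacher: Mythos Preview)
Your proposal is correct and follows essentially the same route as the paper: compute the expectation of each quadratic form in \eqref{eq:ISS}--\eqref{eq:IKK} using $PX=0$, $PHP=P$, and $\tr(PH)=n-\nu$ from Lemma~\ref{lem:P}, then collapse via cyclic invariance of the trace. The only cosmetic difference is that the paper packages the Gaussian quadratic-form identity as $E(y^TAy)=\tr(AE(yy^T))$ together with $PE(yy^T)=\sigma^2PH$ (Lemma~\ref{lem:P}(4)), whereas you state the standard formula $E(y^TAy)=\sigma^2\tr(AH)+\tau^TX^TAX\tau$ and kill the second term via $AX=0$; these are the same computation.
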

\begin{proof} The formulas can be found in \cite{GTC95}.
 Here we supply alternative proof.
First note that $PX=0$ and
\begin{align}
PE(yy^T)&=P(\sigma^2H + X\tau (X\tau)^T)=\sigma^2PH.
\label{eq:PEyy}
\end{align}
Then
\begin{align}
E(y^TPy)&=E(\tr(Pyy^T))=\tr(PE(yy^T))  \nonumber \\
& =\sigma^2\tr(PH) =(n-\nu)\sigma^2. \label{eq:Eypy}
\end{align}
where $\rank(L_2)=n-\rank(X)$ due to $L_2^TX=0$.
Therefore
\begin{equation}
E(\I_O(\sigma^2,\sigma^2))=\frac{E(y^TPy)}{\sigma^6}-\frac{n-\nu}{2\sigma^4}=\frac{n-\nu}{2\sigma^4}.
\end{equation}
Second, we notice that $PHP=P$. Apply the procedure in
\eqref{eq:Eypy}, we have
\begin{align}
E(y^TP\dot{H}_iPy)&= \tr(P\dot{H}_iPE(yy^T))=\sigma^2\tr(P\dot{H}_iPH)\nonumber \\
&=\sigma^2\tr(PHP\dot{H}_i) =\sigma^2\tr(P\dot{H}_i),  \label{eq:yphpy}\\
E(y^TP\dot{H}_iP\dot{H}_jPy) &=
\sigma^2\tr(P\dot{H}_iP\dot{H}_jPH)  \nonumber \\
&=\sigma^2\tr(PHP\dot{H}_iP\dot{H}_j) \nonumber \\
& =\sigma^2\tr(P\dot{H}_iP\dot{H}_j), \label{eq:yphphpy}\\
E(y^TP\ddot{H}_{ij}Py)&=\sigma^2\tr(P\ddot{H}_{ij}PH)=\sigma^2\tr(P\ddot{H}_{ij}).
\label{eq:yphhpy}
\end{align}
Substitute \eqref{eq:yphpy} into \eqref{eq:ISK}, we obtain
\eqref{eq:FISK}.   Substitute \eqref{eq:yphphpy} and
\eqref{eq:yphhpy} to \eqref{eq:IKK}, we obtain \eqref{eq:FIKK}.
% \flushright \qed
\end{proof}

Because the elements of the Fisher information
matrix have simper forms than these of the observed information
matrix, in practice, the
\textit{Fisher information matrix}, $\I=E(\I_O)$, is preferred.
The corresponding algorithm is referred to as
the \textit{Fisher scoring algorithm} \cite{L87}. The Fisher scoring algorithm is widely used in many machine learning algorithms.

The Fisher scoring algorithm is a great success in reducing computations in the Hessian matrix of a log-likelihood. However, notice that the elements $\I(\sigma^2, \kappa_i)$ and $\I(\kappa_i, \kappa_j)$ in the Fisher information still involve computationally intensive trace terms of matrix products.
Evaluating these trace terms is still computationally prohibitive for big data sets.  On the other hand, we notice that
some quadratic terms in $\I_O(\sigma^2, \kappa_i)$ and $\I_O(\kappa_i, \kappa_j)$ is easier to be evaluated because they can be transformed as several matrix vector multiplications.
One natural thinking is whether one can obtain an approximated information matrix by some combination of the Fisher information and the observed information such that only quadratic terms remain.

Following the idea used in \cite{GTC95}, where the \textit{average information} $\I_{A}$ is introduced as
\begin{align}
\I_A(\sigma^2,\sigma^2)&=\frac{1}{2\sigma^6}y^TPy;  \label{eq:ASS}\\
\I_A(\sigma^2,\kappa_i)&=\frac{1}{2\sigma^4}y^TP\dot{H}_iPy;\label{eq:ASK}\\
\I_A(\kappa_i,\kappa_j)&=\frac{1}{2\sigma^2}y^TP\dot{H}_iP\dot{H}_jPy.
\label{eq:AKK}
\end{align}
The authors in \cite{ZGL16} prove that such an ``average information" is in fact a part of the mean of the observed and Fisher information.
For example,
\begin{multline}
\frac{\I(\kappa_i, \kappa_j)+\I_O(\kappa_i,\kappa_j)}{2}
=\underbrace{\frac{y^TPH_iPH_jPy}{2\sigma_2}}_{\I_A(\kappa_i,\kappa_j)} \\
+\underbrace{\frac{\tr(PH_{ij})-y^TPH_{ij}Py/\sigma^2
}{4}}_{\hat{\I}_Z(\kappa_i,\kappa_j)}.% \label{eq:AKK}
\end{multline}
According to the classical matrix splitting techniques
\cite[p.94]{Varga99}, this technique can be formulated as follows.

\begin{theorem}
Let $\I_O$ and $\I$ be the observed information matrix and the Fisher information matrix for the residual
log-likelihood of linear mixed model respectively, then the average of the observed information matrix and the Fisher information matrix can be split as $\frac{\I_O+\I}{2}=\I_A+ I_Z$, such that the expectation of $\I_A$ is the Fisher information matrix and $E(\hat{\I}_{Z})=0$.  \label{cor1}
\end{theorem}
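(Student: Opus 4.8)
The plan is to take the splitting as a \emph{definition}: set $\I_Z:=\frac{\I_O+\I}{2}-\I_A$, so that $\frac{\I_O+\I}{2}=\I_A+\I_Z$ holds by construction, and then prove the two distributional statements. Since $\I=E(\I_O)$ by definition and $\I$ is nonrandom, $E(\frac{\I_O+\I}{2})=\frac12(E(\I_O)+\I)=\I$, so $E(\I_A)=\I$ and $E(\I_Z)=0$ are equivalent and it suffices to establish either one. I would establish both: $E(\I_A)=\I$ by a one-line comparison with the Fisher formulas, and $E(\I_Z)=0$ after writing $\I_Z$ out block by block, which also exhibits its explicit form.

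All the analytic facts needed are already available. From Lemma~\ref{lem:P} we have $PX=0$, $PHP=P$ and $PE(yy^T)=\sigma^2PH$; hence for any symmetric matrix $M$ not depending on $y$,
\begin{align*}
E\big(y^TPMPy\big)&=\tr\big(PMP\,E(yy^T)\big)=\tr\big(PMP(\sigma^2H+X\tau(X\tau)^T)\big)\\
&=\sigma^2\tr(PMPH)=\sigma^2\tr(PHPM)=\sigma^2\tr(PM),
\end{align*}
which specializes to \eqref{eq:yphpy} (take $M=\dot H_i$) and \eqref{eq:yphhpy} (take $M=\ddot H_{ij}$); together with $E(y^TPy)=(n-\nu)\sigma^2$ from \eqref{eq:Eypy} and $E(y^TP\dot H_iP\dot H_jPy)=\sigma^2\tr(P\dot H_iP\dot H_j)$ from \eqref{eq:yphphpy}. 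Comparing these expectations with the definitions \eqref{eq:ASS}--\eqref{eq:AKK} of $\I_A$ against the Fisher entries \eqref{eq:FISS}--\eqref{eq:FIKK} gives $E(\I_A)=\I$ at once.

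Next I would compute $\I_Z=\frac{\I_O+\I}{2}-\I_A$ entrywise using \eqref{eq:ISS}--\eqref{eq:IKK}, \eqref{eq:FISS}--\eqref{eq:FIKK} and \eqref{eq:ASS}--\eqref{eq:AKK}. In the $(\sigma^2,\sigma^2)$ entry the term $-\frac{n-p}{2\sigma^4}$ of $\I_O$ cancels the Fisher entry, leaving $\frac{y^TPy}{2\sigma^6}=\I_A(\sigma^2,\sigma^2)$, so $\I_Z(\sigma^2,\sigma^2)=0$ identically. For the remaining entries, collecting terms yields
\begin{align*}
\I_Z(\sigma^2,\kappa_i)&=\frac{1}{4\sigma^2}\Big(\tr(P\dot H_i)-\tfrac{1}{\sigma^2}\,y^TP\dot H_iPy\Big),\\
\I_Z(\kappa_i,\kappa_j)&=\frac14\Big(\tr(P\ddot H_{ij})-\tfrac{1}{\sigma^2}\,y^TP\ddot H_{ij}Py\Big),
\end{align*}
where in the $(\kappa_i,\kappa_j)$ entry the $\tr(P\dot H_iP\dot H_j)$ terms contributed with opposite signs by $\I_O$ and by $\I$ cancel in the average, and the surviving $\frac{1}{\sigma^2}y^TP\dot H_iP\dot H_jPy$ is exactly $\I_A(\kappa_i,\kappa_j)$; this recovers the formula for $\hat{\I}_Z(\kappa_i,\kappa_j)$ displayed before the theorem. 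Every block of $\I_Z$ thus has the form $\frac{c}{\sigma^{2k}}\big(\tr(PM)-\sigma^{-2}y^TPMPy\big)$ with $M$ symmetric, so the master identity above gives $E(\I_Z)=0$.

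The argument is really just careful bookkeeping; the one delicate point is the $(\kappa_i,\kappa_j)$ block, where one must correctly handle the coefficient $2$ in front of $y^TP\dot H_iP\dot H_jPy$ in \eqref{eq:IKK}, the matching and signs of the two $\tr(P\dot H_iP\dot H_j)$ contributions, and the symmetry of $P$, $\dot H_i$ and $\ddot H_{ij}$, which is what allows the quadratic forms to be rewritten as traces and the cyclic property of the trace to be applied. Once those cancellations are in place, $E(\I_A)=\I$ and $E(\I_Z)=0$ both follow, which is the assertion.
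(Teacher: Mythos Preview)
Your proposal is correct and follows essentially the same route as the paper: define $\I_Z=\tfrac{\I_O+\I}{2}-\I_A$, write it out block by block to obtain exactly the three formulas the paper records, and then verify $E(\I_A)=\I$ and $E(\I_Z)=0$ using the expectation identities \eqref{eq:Eypy}, \eqref{eq:yphpy}, \eqref{eq:yphphpy}, \eqref{eq:yphhpy}. Your packaging via the single ``master identity'' $E(y^TPMPy)=\sigma^2\tr(PM)$ and the observation that $E(\I_A)=\I$ and $E(\I_Z)=0$ are equivalent (since $E(\tfrac{\I_O+\I}{2})=\I$) is a tidy presentation, but the substance and the cancellations in each block are identical to the paper's proof.
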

\begin{proof}
Let the element of $\I_A$ be defined as in \eqref{eq:ASS} to \eqref{eq:AKK}, we have
\begin{align}
I_Z(\sigma^2,\sigma^2)&=0, \\
I_Z(\sigma^2,\kappa_i)&=\frac{tr(P\dot{H}_i)}{4\sigma^2}-\frac{y^TP\dot{H}_iPy}{4\sigma^4}, \\
I_Z(\kappa_i,\kappa_j)& =\frac{\tr(PH_{ij})-y^TPH_{ij}Py/\sigma^2,
}{4}
\end{align}
Apply the result in \eqref{eq:Eypy}, we have
\begin{equation}
E(\I_A(\sigma^2,\sigma_2))=\frac{(n-\nu)}{2\sigma^4}=\I(\sigma^2,\sigma^2).
\end{equation}
Apply the result in \eqref{eq:yphpy}, we have $$ E(I_Z(\sigma^2,\kappa_i))=0$$ and
\begin{equation*}
E(\I_A(\sigma^2,\kappa_i))=\frac{\tr(P\dot{H}_i)}{2\sigma^2}.
\end{equation*}
Apply the result in \eqref{eq:yphphpy}, we have
$$E(\I_Z(\kappa_i,\kappa_j))=0$$ and
$$
E(\I_A(\kappa_i,\kappa_j))=\frac{\tr(P\dot{H}_iP{H}_j)}{2}=I(\kappa_i,\kappa_j).
$$
\end{proof}

\section{Compute elements of averaged splitting information}
\label{sec:comp}

Compare $\I_A $ with $\I_O$, and $\I_F$ in Table \ref{tab:splitting}, in contrast with $I_{O}(\kappa_i,\kappa_j)$ which involves 4 matrix-matrix products, $\I_A(\kappa_i,\kappa_j)$ only involves a quadratic term
which can be evaluate by 4 matrix-vector multiplications and an inner product as in Algorithm~\ref{alg:IKK}.
\begin{algorithm}
\caption{Compute $\I_A(\kappa_i,\kappa_j)=\frac{y^TP\dot{H}_iP\dot{H}_jPy}{2\sigma_2}$}
\label{alg:IKK}
\begin{algorithmic}[1]
\State{ $\xi =Py$ }
\State{ $\eta_i =H_i \xi$; $\eta_j=H_j\xi$};
\State{ $\zeta= P\eta_j $}
\State{ $\I_A(\kappa_i ,\kappa_j)=\frac{\eta_i^T \xi}{2\sigma^2}$}
\end{algorithmic}
\end{algorithm}
One might think that $Py$ is complicated because of its representation in \eqref{eq:p}, whereas it turns
out that $Py$ have a very simple representation. We introduce the following lemma.
\begin{lemma}
Let $H=R+ZGZ^T$, then
$$
H^{-1}  =R^{-1} -R^{-1}Z(Z^TR^{-1}Z+G^{-1})^{-1} Z^TR^{-1}.
$$
\label{lem:H-1}
\end{lemma}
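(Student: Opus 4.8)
The plan is to prove the identity by direct verification, i.e.\ to show that the matrix on the right multiplies $H$ to the identity; this is simply the Sherman--Morrison--Woodbury formula specialised to $A=R$, $U=Z$, $C=G$, $V=Z^T$, and I would carry out the bookkeeping rather than cite the named result. Throughout one uses that $R$ and $G$ are symmetric positive definite (as assumed for the linear mixed model), so that $R^{-1}$, $G^{-1}$ and the ``capacitance'' matrix $Z^TR^{-1}Z+G^{-1}$ all exist, and that $H=R+ZGZ^T$ is itself symmetric positive definite, hence nonsingular.

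First I would name the candidate $M=R^{-1}-R^{-1}Z(Z^TR^{-1}Z+G^{-1})^{-1}Z^TR^{-1}$ and expand the product $HM=(R+ZGZ^T)M$ into its four pieces, grouping according to which summand of $M$ each comes from. Collecting terms gives
\[
HM = I + ZGZ^TR^{-1} - \bigl(Z + ZGZ^TR^{-1}Z\bigr)(Z^TR^{-1}Z+G^{-1})^{-1}Z^TR^{-1}.
\]
The key algebraic step is to recognise that the bracketed factor equals $ZG\bigl(G^{-1}+Z^TR^{-1}Z\bigr)$, so the inverse cancels and the final term collapses to $ZGZ^TR^{-1}$, which exactly cancels the middle term and leaves $HM=I$. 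Since $H$ is square and nonsingular this forces $M=H^{-1}$; alternatively one notes both $M$ and $H^{-1}$ are symmetric and appeals to uniqueness of the inverse, or repeats the symmetric computation of $MH$.

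There is no genuine obstacle here beyond keeping the matrix products in the correct order and not commuting factors illegitimately; the only point worth stating carefully is the invertibility of $Z^TR^{-1}Z+G^{-1}$, which is precisely where the hypothesis that $G$ (equivalently $G^{-1}$) is nonsingular is used — otherwise the right-hand side is not even well defined. If a more structural argument is wanted, one may instead form the block matrix $\left(\begin{smallmatrix} R & Z \\ Z^T & -G^{-1}\end{smallmatrix}\right)$ and read the identity off by equating the two Schur complements arising from its two block $LDL^T$ factorisations; this is the same computation repackaged, and I would include it only as a remark.
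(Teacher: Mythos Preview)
Your proposal is correct and essentially identical in spirit to the paper's proof: the paper simply cites its Fact~\ref{fact:1} (the Sherman--Morrison--Woodbury identity) with $A=R$, $B=Z$, $C=G$, $D=Z^T$, whereas you carry out the one-line verification of that identity directly. The algebra you outline is right, and your remark on the invertibility of $Z^TR^{-1}Z+G^{-1}$ is the only hypothesis actually needed.
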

\begin{proof}
Using Fact~\ref{fact:1}.
\end{proof}
\begin{lemma}
The inverse of the matrix $C$ in \eqref{eq:mme} is given by
\[
C^{-1} =
\begin{pmatrix}
C_{XX} & C_{XZ} \\
C_{ZX} & C_{ZZ}
\end{pmatrix}^{-1}=
\begin{pmatrix}
C^{XX}  & C^{XZ} \\
C^{ZX} & C^{ZZ} \\
\end{pmatrix}
\]
where
\begin{align}
C^{XX} & =(X^TH^{-1}X)^{-1},\\
C^{XZ} & =-C^{XX}X^TR^{-1}ZC_{ZZ}^{-1}, \\
C^{ZX} &= -C_{ZZ}^{-1} Z^TR^{-1}XC^{XX}, \\
C^{ZZ} & =C_{ZZ}^{-1}+C_ZZ^{-1}Z^TR^{-1}XC^{XX}X^TR^{-1} Z^TC_{ZZ}^{-1}.
\end{align}
\end{lemma}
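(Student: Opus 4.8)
The plan is to read off $C^{-1}$ from the standard formula for the inverse of a $2\times2$ block matrix, eliminating the $(Z,Z)$ block first, and then to recognise the Schur complement that appears as exactly $X^TH^{-1}X$ via Lemma~\ref{lem:H-1}. Writing the blocks of $C$ in \eqref{eq:mme} as $C_{XX}=X^TR^{-1}X$, $C_{XZ}=X^TR^{-1}Z$, $C_{ZX}=Z^TR^{-1}X=C_{XZ}^T$ and $C_{ZZ}=Z^TR^{-1}Z+G^{-1}$, I would first note that $R$ and $G$ are symmetric positive definite, hence so is $C_{ZZ}$, and (using that $X$ has full column rank) so is $C$; in particular every inverse written below exists.

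The one substantive step is the evaluation of the Schur complement of $C_{ZZ}$ in $C$. Set
\begin{multline*}
S:=C_{XX}-C_{XZ}C_{ZZ}^{-1}C_{ZX}\\
=X^TR^{-1}X-X^TR^{-1}Z(Z^TR^{-1}Z+G^{-1})^{-1}Z^TR^{-1}X.
\end{multline*}
By Lemma~\ref{lem:H-1} we have $H^{-1}=R^{-1}-R^{-1}Z(Z^TR^{-1}Z+G^{-1})^{-1}Z^TR^{-1}$; multiplying this identity on the left by $X^T$ and on the right by $X$ gives precisely $S=X^TH^{-1}X$, and hence $S^{-1}=(X^TH^{-1}X)^{-1}=C^{XX}$.

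It then remains to substitute $S=X^TH^{-1}X$ into the standard block-inversion formula obtained by eliminating the lower-right block, which gives the $(X,X)$, $(X,Z)$, $(Z,X)$ and $(Z,Z)$ blocks of $C^{-1}$ as $S^{-1}$, $-S^{-1}C_{XZ}C_{ZZ}^{-1}$, $-C_{ZZ}^{-1}C_{ZX}S^{-1}$ and $C_{ZZ}^{-1}+C_{ZZ}^{-1}C_{ZX}S^{-1}C_{XZ}C_{ZZ}^{-1}$ respectively; substituting back the definitions of $C_{XZ}$, $C_{ZX}$ and $S^{-1}=C^{XX}$ reproduces the four displayed formulas for $C^{XX}$, $C^{XZ}$, $C^{ZX}$ and $C^{ZZ}$. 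If one prefers not to quote the formula, the same blocks fall out of block Gaussian elimination on $C$, or from checking $CC^{-1}=I$ directly using $C_{ZX}=C_{XZ}^T$. The only point where anything beyond routine bookkeeping is needed is the Schur-complement identity $C_{XX}-C_{XZ}C_{ZZ}^{-1}C_{ZX}=X^TH^{-1}X$, and Lemma~\ref{lem:H-1} delivers it immediately (that lemma in turn resting on the Sherman--Morrison--Woodbury identity, Fact~\ref{fact:1}); so there is no real obstacle.
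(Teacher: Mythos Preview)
Your proposal is correct and follows essentially the same route as the paper: apply the block-inversion formula (the paper's Fact~\ref{fact:S}) with the $(Z,Z)$ block eliminated first, and then identify the resulting Schur complement $C_{XX}-C_{XZ}C_{ZZ}^{-1}C_{ZX}$ with $X^TH^{-1}X$ by invoking Lemma~\ref{lem:H-1}. The paper only writes out the $C^{XX}$ block explicitly and leaves the other three to the reader, whereas you also note the positivity/invertibility prerequisites; otherwise the arguments coincide.
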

\begin{proof}
We only proof the formula
According to Fact~\ref{fact:S},
\begin{align*}
C^{XX} & =((X^TR^{-1}X)^{-1}-(X^TR^{-1}Z) C_{ZZ}^{-1}(Z^TR^{-1}X))^{-1} \\
 & = (X^T \underbrace{(R^{-1}-R^{-1}Z(Z^TR^{-1}Z+G^{-1})^{-1}Z^TR^{-1})}_{H^{-1}} X)^{-1} \\
 & =(X^TH^{-1}X)^{-1}.
\end{align*}
\end{proof}

We shall prove the following results
\begin{theorem}
Let $P$ be defined in \eqref{eq:p}, $\hat{\tau}$ and $\tilde{u}$ be the solution to \eqref{eq:mme}, and $e$ be the residual $e=y-X\hat{\tau}-Z\tilde{u}$, then
$Py=R^{-1} e$, and
\begin{align}
P&=H^{-1} -H^{-1}X(X^TH^{-1}X)^{-1}X^TH^{-1} \\
 & =R^{-1} -R^{-1}WC^{-1}W^TR^{-1} \label{eq:P2}
\end{align}
where $W=[X,Z]$ is the design matrix for the fixed and random effects.
\end{theorem}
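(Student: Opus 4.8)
The plan is to collapse both claims into the single matrix identity \eqref{eq:P2}, prove that identity, and then obtain $Py=R^{-1}e$ as an immediate corollary. First I would note that the coefficient matrix of the mixed model equations \eqref{eq:mme} factors as $C=W^TR^{-1}W+\operatorname{diag}(0,G^{-1})$ with $W=[X,Z]$, and that its right-hand side is precisely $W^TR^{-1}y$. Hence $(\hat\tau^T,\tilde u^T)^T=C^{-1}W^TR^{-1}y$, so $X\hat\tau+Z\tilde u=W C^{-1}W^TR^{-1}y$ and therefore $e=y-W C^{-1}W^TR^{-1}y=(I-W C^{-1}W^TR^{-1})y$; left-multiplying by $R^{-1}$ gives $R^{-1}e=(R^{-1}-R^{-1}W C^{-1}W^TR^{-1})y$. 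Thus once \eqref{eq:P2} is proved, $Py=R^{-1}e$ follows at once, and as a consistency check $PX=0$ from Lemma~\ref{lem:P} matches $e$ being $R^{-1}$-orthogonal to the columns of $X$.

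To establish \eqref{eq:P2} I would substitute the block inverse of $C$ recorded in the lemma above directly into $R^{-1}W C^{-1}W^TR^{-1}$. Writing $C_{ZZ}=Z^TR^{-1}Z+G^{-1}$, Lemma~\ref{lem:H-1} supplies the three facts I need: $R^{-1}-H^{-1}=R^{-1}ZC_{ZZ}^{-1}Z^TR^{-1}$; $H^{-1}X=R^{-1}X-R^{-1}ZC_{ZZ}^{-1}Z^TR^{-1}X$; and $X^TH^{-1}X=X^TR^{-1}X-X^TR^{-1}ZC_{ZZ}^{-1}Z^TR^{-1}X$, so that the block $C^{XX}$ equals $(X^TH^{-1}X)^{-1}$. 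Expanding $R^{-1}W C^{-1}W^TR^{-1}=R^{-1}XC^{XX}X^TR^{-1}+R^{-1}XC^{XZ}Z^TR^{-1}+R^{-1}ZC^{ZX}X^TR^{-1}+R^{-1}ZC^{ZZ}Z^TR^{-1}$ and inserting $C^{XX},C^{XZ},C^{ZX},C^{ZZ}$, the piece $R^{-1}ZC_{ZZ}^{-1}Z^TR^{-1}=R^{-1}-H^{-1}$ separates off and the four remaining terms are exactly the expansion of $\bigl(R^{-1}X-R^{-1}ZC_{ZZ}^{-1}Z^TR^{-1}X\bigr)(X^TH^{-1}X)^{-1}\bigl(X^TR^{-1}-X^TR^{-1}ZC_{ZZ}^{-1}Z^TR^{-1}\bigr)$, which by the second and third facts is $H^{-1}X(X^TH^{-1}X)^{-1}X^TH^{-1}$. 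Hence $R^{-1}W C^{-1}W^TR^{-1}=(R^{-1}-H^{-1})+H^{-1}X(X^TH^{-1}X)^{-1}X^TH^{-1}$, and subtracting this from $R^{-1}$ leaves precisely $P$ as defined in \eqref{eq:p}.

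A cleaner, more structural alternative would be to recognise $R^{-1}-R^{-1}W C^{-1}W^TR^{-1}$ as the Schur complement of the block $C$ in the augmented matrix $\mathcal M=\left(\begin{smallmatrix}R^{-1}&R^{-1}W\\ W^TR^{-1}&C\end{smallmatrix}\right)$ and to evaluate it in two stages: eliminating the random-effects block of $\mathcal M$ first turns its leading block into $\left(\begin{smallmatrix}H^{-1}&H^{-1}X\\ X^TH^{-1}&X^TH^{-1}X\end{smallmatrix}\right)$ --- this is exactly the Woodbury step of Lemma~\ref{lem:H-1} --- and then eliminating $X^TH^{-1}X$ from that $2\times2$ block leaves $H^{-1}-H^{-1}X(X^TH^{-1}X)^{-1}X^TH^{-1}=P$, by the quotient (stage-wise) property of Schur complements.

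The main obstacle is the bookkeeping in the first route: keeping the four cross-terms of $R^{-1}W C^{-1}W^TR^{-1}$ straight and spotting that, modulo the $R^{-1}-H^{-1}$ piece, they reassemble into the single rank-$p$ correction $H^{-1}X(X^TH^{-1}X)^{-1}X^TH^{-1}$ via the substitution $H^{-1}X=R^{-1}X-R^{-1}ZC_{ZZ}^{-1}Z^TR^{-1}X$. In the second route the only delicate point is invoking the stage-wise Schur-complement identity, which is legitimate here since $R$ and $G$ are positive definite and $X$ has full column rank, so $R$, $C_{ZZ}$, $X^TH^{-1}X$ and $C$ are all nonsingular.
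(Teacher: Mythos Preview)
Your first route is exactly the paper's argument: derive $Py=R^{-1}e$ from \eqref{eq:P2} via $(\hat\tau^T,\tilde u^T)^T=C^{-1}W^TR^{-1}y$, then prove \eqref{eq:P2} by expanding $R^{-1}WC^{-1}W^TR^{-1}$ with the block formulas for $C^{XX},C^{XZ},C^{ZX},C^{ZZ}$, splitting off $R^{-1}ZC_{ZZ}^{-1}Z^TR^{-1}=R^{-1}-H^{-1}$, and recognising what remains as $H^{-1}X(X^TH^{-1}X)^{-1}X^TH^{-1}$. The bookkeeping you flag as the main obstacle is indeed the only work, and the paper carries it out in the same way.

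Your second route via iterated Schur complements of the augmented matrix $\mathcal M$ is not in the paper and is a genuinely cleaner alternative: it replaces the four-term expansion by two invocations of a single structural fact (the quotient property of Schur complements), and it makes transparent why the Woodbury step of Lemma~\ref{lem:H-1} and the final rank-$p$ correction appear together. The paper's direct expansion has the advantage of being entirely self-contained given its block-inverse lemma, whereas your alternative needs the stage-wise Schur identity stated or cited; but once that is granted, your version is shorter and more illuminating.
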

\begin{proof} Suppose \eqref{eq:P2} hold, then
\begin{align}
Py &=R^{-1}y-R^{-1}W \underbrace{C^{-1}W^TR^{-1} y}_{(\hat{\tau}^T, \tilde{u}^T)^T} \\
  & =R^{-1}(y -X\hat{\tau} -Z\tilde{u}) =R^{-1}e
\end{align}
\begin{align*}
&  R^{-1} -R^{-1}WC^{-1}W^TR^{-1} \\
=& R^{-1}
-R^{-1}(X, Z) \begin{pmatrix} C^{XX} &  C^{XZ}  \\ C^{ZX} & C^{ZZ}
\end{pmatrix} \begin{pmatrix} X^T \\Z^T \end{pmatrix} R^{-1}. \\
= & R^{-1} -R^{-1}\{ XC^{XX}X^T -XC^{XZ}Z -ZC^{ZX} + ZC_{ZZ}^{-1}Z \\
& + Z(C_ZZ^{-1}Z^TR^{-1}XC^{XX}X^TR^{-1} Z^TC_{ZZ}^{-1})Z^T \}R^{-1} \\
 = &\underbrace{R^{-1} -R^{-1}ZC_{ZZ}^{-1}Z^TR^{-1}}_{H^{-1}}  \\
 & -(R^{-1} -R^{-1}ZC_{ZZ}^{-1}Z^TR^{-1}) XC^{XX}X^{T}H^{-1} \\
= & H^{-1} -H^{-1}X(X^TH^{-1}X)^{-1}X^TH^{-1}
\end{align*}
\end{proof}
From above results, we find out that evaluating the matrix vector $Py$ is equivalent the solve the linear system \eqref{eq:mme}, and then evaluate the weighted residual $R^{-1}e$. Notice that the matrix $P \in \mathbb{R}^{n\times n}$. On contrast, $C \in \mathbb{R}^{(p+b)\times (p+b)}$ where $p+b$ is the number of fixed effects and random effects. This number $p+b$ is much smaller than the number of observations $n$. In each nonlinear iterations, the matrix $C$  can be pre-factorized for evaluating $P\eta_i$.

\section{Discussion}

From above discussion, we know that he Fisher information has a simper form than the observed information and describes the essential information on the unknown parameters.
Therefore, Fisher information matrix is preferred not only in analyzing the asymptotic behavior
of maximum likelihood estimates \cite{EF76,Z97,Z98} but also in finding the variance of an
estimator and in Bayesian inference \cite{MN04}. In particular, if the Fisher information
matrix is used in the process of a maximum (log-)likelihood method,  which is widely used in
machine learning. Besides the traditional application fields like genetical theory of natural
selection and breeding \cite{F30},  many other fields including theoretical physics have
introduce the Fisher information matrix theory \cite{JJK04,PRE11,V07,HS14}.

The aim of information splitting is to remove computationally expensive and
negligible terms so that a much simper approximated information matrix is obtained. Such a splitting keeps
the essential information and can be used as a good approximation to the observed information matrix which is required for a derivative Newton method.
These formulas are much simpler than that used in genetics wide-association \cite{lip11,lis12,zhang10,zhou12},
and make derivative Newton method applicable for
large data sets which involve many thousands fixed and random effects.

%code \cite{BN98}\cite{P93}
\section*{Acknowledgement}
The first author would like to thank Dr Sue Welham at VSNi Ltd and
Professor Robin Thompson at Rothamsted Research and for introducing
him the restricted maximum likelihood method. The project is supported
by the Natural Science Fund of China (No.11501044)
and the Laboratory of Computational Physics.

%\section*{References}
\bibliographystyle{IEEEtran}
\bibliography{Reportv2}

% Generated by IEEEtran.bst, version: 1.13 (2008/09/30)
\begin{thebibliography}{10}
\providecommand{\url}[1]{#1}
\csname url@samestyle\endcsname
\providecommand{\newblock}{\relax}
\providecommand{\bibinfo}[2]{#2}
\providecommand{\BIBentrySTDinterwordspacing}{\spaceskip=0pt\relax}
\providecommand{\BIBentryALTinterwordstretchfactor}{4}
\providecommand{\BIBentryALTinterwordspacing}{\spaceskip=\fontdimen2\font plus
\BIBentryALTinterwordstretchfactor\fontdimen3\font minus
  \fontdimen4\font\relax}
\providecommand{\BIBforeignlanguage}[2]{{%
\expandafter\ifx\csname l@#1\endcsname\relax
\typeout{** WARNING: IEEEtran.bst: No hyphenation pattern has been}%
\typeout{** loaded for the language `#1'. Using the pattern for}%
\typeout{** the default language instead.}%
\else
\language=\csname l@#1\endcsname
\fi
#2}}
\providecommand{\BIBdecl}{\relax}
\BIBdecl

\bibitem{masud13}
Y.~Masuda, T.~Baba, and M.~Suzuki, ``Application of supernodal sparse
  factorization and inversion to the estimation of (co) variance components by
  residual maximum likelihood,'' \emph{Journal of Animal Breeding and
  Genetics}, 2013.

\bibitem{Cell}
B.~M.~e. Bolker, ``Generalized linear mixed models:a practical guide for
  ecology and evolution,'' \emph{Trends in Ecology and Evolution}, vol.~24,
  no.~3, 2008.

\bibitem{lip11}
C.~Lippert, J.~Listgarten, Y.~Liu, C.~Kadie, R.~Davidson, and D.~Heckerman,
  ``Fast linear mixed models for genome-wide association studies,''
  \emph{Nature Methods}, vol.~8, no.~10, pp. 833--835, 2011.

\bibitem{lis12}
J.~Listgarten, C.~Lippert, C.~M. Kadie, R.~Davidson, E.~Eskin, and
  D.~Heckerman, ``Improved linear mixed models for genome-wide association
  studies,'' \emph{Nature methods}, vol.~9, no.~6, pp. 525--526, 2012.

\bibitem{zhang10}
Z.~Zhang, E.~Ersoz, C.-Q. Lai, R.~Todhunter, H.~K. Tiwari, M.~Gore,
  P.~Bradbury, J.~Yu, D.~Arnett, J.~Ordovas \emph{et~al.}, ``Mixed linear model
  approach adapted for genome-wide association studies,'' \emph{Nature
  genetics}, vol.~42, no.~4, pp. 355--360, 2010.

\bibitem{zhou12}
X.~Zhou and M.~Stephens, ``Genome-wide efficient mixed-model analysis for
  association studies,'' \emph{Nature genetics}, vol.~44, no.~7, pp. 821--824,
  2012.

\bibitem{Rao}
C.~E. McCulloch and S.~R. Searle, \emph{Generalized, linear, and mixed models},
  ser. Wiley Series in Probability and Statistics: Texts, References, and
  Pocketbooks Section.\hskip 1em plus 0.5em minus 0.4em\relax New York:
  Wiley-Interscience [John Wiley \& Sons], 2001.

\bibitem{V90}
A.~Verbyla, ``A conditional derivation of residual maximum likelihood,''
  \emph{Australian Journal of Statistics}, vol.~32, pp. 227--230, 1990.

\bibitem{Searle06}
S.~R. Searle, G.~Casella, and C.~E. McCulloch, \emph{Variance components}, ser.
  Wiley Series in Probability and Statistics.\hskip 1em plus 0.5em minus
  0.4em\relax Hoboken, NJ: Wiley-Interscience [John Wiley \& Sons], 2006,
  reprint of the 1992 original, Wiley-Interscience Paperback Series.

\bibitem{jenn76}
R.~Jennrich and P.~Sampson, ``Newton-raphson and related algorithms for maximum
  likelihood variance component estimation,'' \emph{Technometrics}, vol.~18,
  no.~1, pp. 11--17, 1976.

\bibitem{L87}
N.~Longford, ``A fast scoring algorithm for maximum likelihood estimation in
  unbalanced mixed models with nested random effects,'' \emph{Biometrika},
  vol.~74, no.~4, pp. 817--827, 1987.

\bibitem{ZGL16}
S.~Zhu, T.~Gu, and X.~Liu, ``Information matrix splitting,'' \emph{arXiv},
  2016, arXiv:1605.07646v1.

\bibitem{WZW13}
S.~Sue~Welham, S.~Zhu, and A.~J. Wathen, ``Big data, fast models: faster
  calculation of models from high-throughput biological data sets,'' Smith
  Industry Mathematics Institute, The University of Oxford, Oxford, Knowledge
  Transfer Project Reprot IP12-009, November 2013.

\bibitem{PT71}
H.~D. Patterson and R.~Thompson, ``Recovery of inter-block information when
  block sizes are unequal,'' \emph{Biometrika}, vol.~58, pp. 545--554, 1971.

\bibitem{all04}
L.~Wasserman, \emph{All of statistics: a concise course in statistical
  inference}.\hskip 1em plus 0.5em minus 0.4em\relax Springer, 2004.

\bibitem{Har97}
D.~A. Harville, \emph{Matrix Algebra From A Statistician's Perspective}.\hskip
  1em plus 0.5em minus 0.4em\relax Springer, 1997.

\bibitem{GTC95}
\BIBentryALTinterwordspacing
A.~R. Gilmour, R.~Thompson, and B.~R. Cullis,
  ``\BIBforeignlanguage{English}{Average information {REML}: An efficient
  algorithm for variance parameter estimation in linear mixed models},''
  \emph{\BIBforeignlanguage{English}{Biometrics}}, vol.~51, no.~4, pp. pp.
  1440--1450, 1995. [Online]. Available:
  \url{http://www.jstor.org/stable/2533274}
\BIBentrySTDinterwordspacing

\bibitem{Varga99}
R.~Varga, \emph{Matrix Iterative Analysis}, 2nd~ed.\hskip 1em plus 0.5em minus
  0.4em\relax Springer-Verlag.

\bibitem{EF76}
F.~Edgeworth and R.~Fisher, ``On the efficiency of maximum likelihood
  estimation,'' \emph{Annals of Statistics}, vol.~4, no.~3, pp. 501--514, 1976.

\bibitem{Z97}
R.~Zamir, ``A necessary and sufficient condition for equality in the matrix
  {F}isher information inequality,'' Tel Aviv University, Tech. Rep., 1997.

\bibitem{Z98}
------, ``A proof of the {F}isher information inequality via a data processing
  argument,'' \emph{IEEE Transaction On Information Theory}, vol.~44, pp.
  1246--1250, 2008.

\bibitem{MN04}
J.~Myung and D.~Navarro, ``Information matrix,'' in \emph{Encyclopedia of
  Behavioral Stastics}, B.~Everitt and D.~Howel, Eds., 2004.

\bibitem{F30}
R.~Fisher, \emph{The Genetical Theory of Natural Section}.\hskip 1em plus 0.5em
  minus 0.4em\relax Oxford:Clarendon Press, 1930.

\bibitem{JJK04}
W.~Janke, D.~Johnston, and R.~Kenna, ``Information geometry and phase
  transitions,'' \emph{Physica A}, vol. 336, no. 1-2, p. 181, 2004.

\bibitem{PRE11}
M.~Prokopenko, J.~T. Lizier, J.~T. Lizier, O.~Obst, and X.~R. Wang, ``Relating
  {F}isher information to order parameters,'' \emph{Physical Review E},
  vol.~84, no.~4, 2011.

\bibitem{V07}
M.~Vallisneri, ``A user manual for the fisher information matrix,'' California
  Institute of Technology, Jet Propulsion Laboratory, Tech. Rep., 2007.

\bibitem{HS14}
A.~Heavens, M.~Seikel, B.~Nord, M.~Aich, Y.~Bouffanais, B.~Bassett, and
  M.~Hobson, ``Generalised {F}isher matrices,'' \emph{Mon. Not. R. Astron.
  Soc.}, arXiv:1404.2854v2.

\end{thebibliography}

\appendix
%\set{\thechapter{Alpha}}

\begin{fact}
Let $A \in \mathbb{R}^{n \times n} $, $B \in \mathbb{R}^{n\times m}$,
$C \in \mathbb{R}^{ m\times m}$, and $D \in \mathbb{R}^{m\times n}$. If $A$ is and $C$ is nonsingualr
then
$$
(A+BCD)^{-1}= A^{-1} -A^{-1}B(C^{-1}+DA^{-1}B)DA^{-1}.
$$
\label{fact:1}
\end{fact}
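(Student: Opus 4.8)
The plan is to prove the identity by direct verification: I would exhibit the claimed right-hand side as a genuine inverse of $A+BCD$ by multiplying the two matrices together and checking that the product is the identity. Since every matrix here is square ($A+BCD$ and its proposed inverse are both $n\times n$), it suffices to verify one side of the product; a one-sided inverse of a square matrix is automatically a two-sided inverse, and this simultaneously establishes that $A+BCD$ is nonsingular. I should note that for the identity to hold the central factor must itself be inverted, i.e. the right-hand side must read $A^{-1}-A^{-1}B(C^{-1}+DA^{-1}B)^{-1}DA^{-1}$; this is the form actually invoked in Lemma~\ref{lem:H-1}, and it is the statement I would prove.

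First I would abbreviate the central \emph{capacitance matrix} by $S=C^{-1}+DA^{-1}B$ and denote the candidate inverse by $M=A^{-1}-A^{-1}BS^{-1}DA^{-1}$. Then I would expand the product $(A+BCD)M$ into four terms. The term $A\,A^{-1}$ produces the identity $I_n$, while each of the remaining three terms carries a factor $B$ on the left and a factor $DA^{-1}$ on the right, so I would factor these out and write the error as $B[\,\cdots\,]DA^{-1}$ for a single bracketed $m\times m$ expression, namely $C-S^{-1}-CDA^{-1}BS^{-1}$.

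The decisive step, and the only place any cleverness is needed, is to show this bracket vanishes. The trick is to factor $C$ out of the first and third contributions so as to reconstruct the capacitance matrix: $C-(I+CDA^{-1}B)S^{-1}=C-C(C^{-1}+DA^{-1}B)S^{-1}=C-CSS^{-1}=0$. This is exactly where the hypothesis that $C$ is nonsingular enters, since it lets me rewrite $I$ as $CC^{-1}$ and thereby assemble $S$ inside the parentheses. Once the bracket is zero the whole product collapses to $I_n$, which proves $M=(A+BCD)^{-1}$.

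The main obstacle is bookkeeping rather than depth: the expansion of $(A+BCD)M$ yields several superficially dissimilar quadruple products, and one must resist simplifying them prematurely, instead keeping them grouped around the common factors $B$ and $DA^{-1}$ until the cancellation $C(C^{-1}+DA^{-1}B)=CS$ can be applied. No eigenvalue, determinant, or limiting argument is required; the entire proof is a finite algebraic identity valid over any field, needing only that $A$, $C$, and $S$ be invertible.
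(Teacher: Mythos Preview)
The paper states this fact in the appendix without any proof, so there is nothing to compare against; your direct verification is the standard and correct argument for the Woodbury identity, and you have also correctly spotted that the displayed formula is missing a ${}^{-1}$ on the central factor $C^{-1}+DA^{-1}B$, which is indeed the version actually used in Lemma~\ref{lem:H-1}. Your proof plan is complete as written.
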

\begin{fact}
Let $A, B \in \mathbb{R}^{n\times n}$,then we have
\begin{align*}
(A+B)^{-1}A  & =I-(A+B)^{-1}B \\
A(A+B)^{-1}= I-B(A+B)^{-1}
\end{align*}
\end{fact}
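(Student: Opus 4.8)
The plan is to prove both identities by the same elementary manipulation: express $A$ as the difference $(A+B)-B$ and then distribute the inverse of $A+B$. The only thing this requires is that $A+B$ be nonsingular, which is the implicit hypothesis of the statement, since $(A+B)^{-1}$ must exist for the formulas even to be written down. Notably, neither $A$ nor $B$ need be invertible individually.

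First I would establish the left-hand identity. Writing $A=(A+B)-B$ and left-multiplying by $(A+B)^{-1}$ gives
\[
(A+B)^{-1}A = (A+B)^{-1}\bigl((A+B)-B\bigr) = (A+B)^{-1}(A+B)-(A+B)^{-1}B = I-(A+B)^{-1}B,
\]
where the last equality uses $(A+B)^{-1}(A+B)=I$. For the right-hand identity I would reuse the \emph{same} decomposition $A=(A+B)-B$, but now right-multiply by $(A+B)^{-1}$:
\[
A(A+B)^{-1} = \bigl((A+B)-B\bigr)(A+B)^{-1} = (A+B)(A+B)^{-1}-B(A+B)^{-1} = I-B(A+B)^{-1}.
\]

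There is no genuine obstacle here; the result is a direct consequence of the distributive law together with the defining property of the matrix inverse, and it does not rely on any of the earlier lemmas or on Fact~\ref{fact:1}. The only point worth flagging in the write-up is the standing assumption that $A+B$ is invertible, so that the two one-line computations above are legitimate.
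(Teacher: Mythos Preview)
Your proof is correct and is exactly the approach the paper takes: the paper's entire proof is the one-line hint ``Notice the identity $(A+B)^{-1}(A+B)=I=(A+B)(A+B)^{-1}$,'' and you have simply written out the two expansions that follow from it.
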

\begin{proof}Notice the identity
$$ (A+B)^{-1}(A+B)=I =(A+B)(A+B)^{-1}. $$
\end{proof}
\begin{fact}
Suppose $S=(A-BC^{-1}B^T)^{-1}$ exist, then
\begin{equation*}
\begin{pmatrix}
A  & B  \\ B^T  & C
\end{pmatrix}^{-1}
=
\begin{pmatrix}
S  & -S BC^{-1} \\
-C^{-1}B^TS & C^{-1}B^TSBC^{-1}+C^{-1}
\end{pmatrix}.
\end{equation*}
\label{fact:S}
\end{fact}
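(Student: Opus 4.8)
The plan is to apply the block-inverse identity of Fact~\ref{fact:S} directly to the coefficient matrix $C$ of the mixed-model equations~\eqref{eq:mme}, and then to recognize the resulting Schur complement as $X^TH^{-1}X$ by means of Lemma~\ref{lem:H-1}. First I would match $C$ to the template $\left(\begin{smallmatrix} A & B \\ B^T & C\end{smallmatrix}\right)$ of Fact~\ref{fact:S} with the identification $A=C_{XX}=X^TR^{-1}X$, $B=C_{XZ}=X^TR^{-1}Z$, and lower-right block $C_{ZZ}=Z^TR^{-1}Z+G^{-1}$ (this is the block playing the role of $C$ in Fact~\ref{fact:S}). The symmetry hypothesis $B^T=C_{ZX}$ holds because $R^{-1}$ is symmetric, so $(X^TR^{-1}Z)^T=Z^TR^{-1}X$; and $C_{ZZ}$ is invertible since $G$ and $R$ are positive definite. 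Fact~\ref{fact:S} then expresses every block of $C^{-1}$ in terms of $C_{ZZ}^{-1}$ and the single Schur complement $S=(A-BC_{ZZ}^{-1}B^T)^{-1}$.

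The key step is to evaluate this Schur complement. I would factor $X$ out on both sides,
\[
A-BC_{ZZ}^{-1}B^T = X^T\bigl(R^{-1}-R^{-1}Z(Z^TR^{-1}Z+G^{-1})^{-1}Z^TR^{-1}\bigr)X,
\]
and observe that the bracketed expression is exactly the right-hand side of Lemma~\ref{lem:H-1}, hence equals $H^{-1}$. This yields $A-BC_{ZZ}^{-1}B^T=X^TH^{-1}X$, and therefore $S=(X^TH^{-1}X)^{-1}=C^{XX}$, which is the stated top-left block. This identification is the only genuinely nontrivial part of the argument, and it rests on Lemma~\ref{lem:H-1}, which is itself the Woodbury identity of Fact~\ref{fact:1} applied to $H=R+ZGZ^T$.

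With $S=C^{XX}$ in hand, the remaining three blocks follow by reading off the template of Fact~\ref{fact:S} and substituting $B=X^TR^{-1}Z$, $B^T=Z^TR^{-1}X$: the off-diagonal blocks become $C^{XZ}=-SBC_{ZZ}^{-1}=-C^{XX}X^TR^{-1}ZC_{ZZ}^{-1}$ and $C^{ZX}=-C_{ZZ}^{-1}B^TS=-C_{ZZ}^{-1}Z^TR^{-1}XC^{XX}$, while the lower-right block becomes $C^{ZZ}=C_{ZZ}^{-1}B^TSBC_{ZZ}^{-1}+C_{ZZ}^{-1}=C_{ZZ}^{-1}Z^TR^{-1}XC^{XX}X^TR^{-1}ZC_{ZZ}^{-1}+C_{ZZ}^{-1}$, matching the claimed formulas. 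Since these last three are pure substitutions, the main obstacle is the Schur-complement identification above; everything else is bookkeeping, and one need only confirm the invertibility of $C_{ZZ}$ so that Fact~\ref{fact:S} is applicable.
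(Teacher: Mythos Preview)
Your proposal does not address the statement you were asked to prove. The statement is the \emph{general} block-inverse formula for $\left(\begin{smallmatrix} A & B \\ B^T & C\end{smallmatrix}\right)^{-1}$ in terms of the Schur complement $S=(A-BC^{-1}B^T)^{-1}$. Instead, you have written a proof of the \emph{specific} lemma giving the blocks of the inverse of the mixed-model coefficient matrix in~\eqref{eq:mme}: your very first sentence says you will ``apply the block-inverse identity of Fact~\ref{fact:S}'' to that matrix. Invoking Fact~\ref{fact:S} as a tool cannot constitute a proof of Fact~\ref{fact:S}; the argument is circular with respect to the target statement.

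What you have written is a correct argument for that other lemma (and it matches how the paper establishes $C^{XX}=(X^TH^{-1}X)^{-1}$), but it is not what was asked. The paper itself supplies no proof of Fact~\ref{fact:S}; it is quoted as a standard identity. If you wish to justify it, the direct route is verification: multiply the claimed inverse on the right by $\left(\begin{smallmatrix} A & B \\ B^T & C\end{smallmatrix}\right)$ and check that each of the four resulting blocks collapses to the corresponding block of the identity. For instance, the top-left block is $SA - SBC^{-1}B^T = S(A-BC^{-1}B^T)=SS^{-1}=I$, and the remaining three follow by analogous cancellations. No Woodbury identity, no Lemma~\ref{lem:H-1}, and no reference to the mixed model are needed.
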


% that's all folks
\end{document}